\documentclass[12pt]{amsart}       
\usepackage{txfonts}
\usepackage{amssymb}
\usepackage{eucal}
\usepackage{graphicx}
\usepackage{amsmath}
\usepackage{amscd}
\usepackage[all]{xy}           
\usepackage{tikz}
\usepackage{amsfonts,latexsym}
\usepackage{xspace}
\usepackage{epsfig}
\usepackage{float}
\usepackage{axodraw}
\usepackage{color}
\usepackage{fancybox}
\usepackage{colordvi}
\usepackage{multicol}
\usepackage{colordvi}
\usepackage[colorlinks,final,backref=page,hyperindex,hypertex]{hyperref}
\usepackage[active]{srcltx} 

\topmargin -.8cm \textheight 22.8cm \oddsidemargin 0cm \evensidemargin -0cm \textwidth 16.3cm



\newtheorem{theorem}{Theorem}[section]
\newtheorem{prop}[theorem]{Proposition}
\theoremstyle{definition}
\newtheorem{defn}[theorem]{Definition}
\newtheorem{lemma}[theorem]{Lemma}
\newtheorem{coro}[theorem]{Corollary}
\newtheorem{prop-def}{Proposition-Definition}[section]
\newtheorem{coro-def}{Corollary-Definition}[section]

\newtheorem{exam}[theorem]{Example}


\newcommand{\nc}{\newcommand}
\nc{\tred}[1]{\textcolor{red}{#1}}
\nc{\tblue}[1]{\textcolor{blue}{#1}}
\nc{\tgreen}[1]{\textcolor{green}{#1}}
\nc{\tpurple}[1]{\textcolor{purple}{#1}}
\nc{\btred}[1]{\textcolor{red}{\bf #1}}
\nc{\btblue}[1]{\textcolor{blue}{\bf #1}}
\nc{\btgreen}[1]{\textcolor{green}{\bf #1}}
\nc{\btpurple}[1]{\textcolor{purple}{\bf #1}}
\nc{\NN}{{\mathbb N}}
\nc{\ncsha}{{\mbox{\cyr X}^{\mathrm NC}}} \nc{\ncshao}{{\mbox{\cyr
X}^{\mathrm NC}_0}}

\renewcommand{\frak}{\mathfrak}

\newcommand{\efootnote}[1]{}

\renewcommand{\textbf}[1]{}

\newcommand{\delete}[1]{}

\nc{\mlabel}[1]{\label{#1}}  
\nc{\mcite}[1]{\cite{#1}}  
\nc{\mref}[1]{\ref{#1}}  
\nc{\mbibitem}[1]{\bibitem{#1}} 

\delete{
\nc{\mlabel}[1]{\label{#1}  
{\hfill \hspace{1cm}{\small\tt{{\ }\hfill(#1)}}}}
\nc{\mcite}[1]{\cite{#1}{\small{\tt{{\ }(#1)}}}}  
\nc{\mref}[1]{\ref{#1}{{\tt{{\ }(#1)}}}}  
\nc{\mbibitem}[1]{\bibitem[\bf #1]{#1}} 
}


\nc{\opa}{\ast} \nc{\opb}{\odot} \nc{\op}{\bullet} \nc{\pa}{\frakL}
\nc{\arr}{\rightarrow} \nc{\lu}[1]{(#1)} \nc{\mult}{\mrm{mult}}
\nc{\diff}{\mathfrak{Diff}}
\nc{\opc}{\sharp}\nc{\opd}{\natural}
\nc{\ope}{\circ}
\nc{\dpt}{\mathrm{d}}
\nc{\hck}{H_{RT}}
\nc{\vdf}{\calf}
\nc{\ldf}{\calf_\ell}
\nc{\hlf}{H_\ell}
\nc{\diam}{alternating\xspace}
\nc{\Diam}{Alternating\xspace}
\nc{\cdiam}{canonical alternating\xspace}
\nc{\Cdiam}{Canonical alternating\xspace}
\nc{\AW}{\mathcal{A}}

\nc{\ari}{\mathrm{ar}}

\nc{\lef}{\mathrm{lef}}

\nc{\Sh}{\mathrm{ST}}

\nc{\Cr}{\mathrm{Cr}}

\nc{\st}{{Schr\"oder tree}\xspace}
\nc{\sts}{{Schr\"oder trees}\xspace}

\nc{\vertset}{\Omega} 

\nc{\assop}{\quad \begin{picture}(5,5)(0,0)
\line(-1,1){10}
\put(-2.2,-2.2){$\bullet$}
\line(0,-1){10}\line(1,1){10}
\end{picture} \quad \smallskip}

\nc{\operator}{\begin{picture}(5,5)(0,0)
\line(0,-1){6}
\put(-2.6,-1.8){$\bullet$}
\line(0,1){9}
\end{picture}}

\nc{\idx}{\begin{picture}(6,6)(-3,-3)
\put(0,0){\line(0,1){6}}
\put(0,0){\line(0,-1){6}}
\end{picture}}

\nc{\pb}{{\mathrm{pb}}}
\nc{\Lf}{{\mathrm{Lf}}}

\nc{\lft}{{left tree}\xspace}
\nc{\lfts}{{left trees}\xspace}

\nc{\fat}{{fundamental averaging tree}\xspace}

\nc{\fats}{{fundamental averaging trees}\xspace}
\nc{\avt}{\mathrm{Avt}}

\nc{\rass}{{\mathit{RAss}}}

\nc{\aass}{{\mathit{AAss}}}

\nc{\vin}{{\mathrm Vin}}    
\nc{\lin}{{\mathrm Lin}}    
\nc{\inv}{\mathrm{I}n}
\nc{\gensp}{V} 
\nc{\genbas}{\mathcal{V}} 
\nc{\bvp}{V_P}     
\nc{\gop}{{\,\omega\,}}     

\nc{\bin}[2]{ (_{\stackrel{\scs{#1}}{\scs{#2}}})}  
\nc{\binc}[2]{ \left (\!\! \begin{array}{c} \scs{#1}\\
    \scs{#2} \end{array}\!\! \right )}  
\nc{\bincc}[2]{  \left ( {\scs{#1} \atop
    \vspace{-1cm}\scs{#2}} \right )}  
\nc{\bs}{\bar{S}} \nc{\cosum}{\sqsubset} \nc{\la}{\longrightarrow}
\nc{\rar}{\rightarrow} \nc{\dar}{\downarrow} \nc{\dprod}{**}
\nc{\dap}[1]{\downarrow \rlap{$\scriptstyle{#1}$}}
\nc{\md}{\mathrm{dth}} \nc{\uap}[1]{\uparrow
\rlap{$\scriptstyle{#1}$}} \nc{\defeq}{\stackrel{\rm def}{=}}
\nc{\disp}[1]{\displaystyle{#1}} \nc{\dotcup}{\
\displaystyle{\bigcup^\bullet}\ } \nc{\gzeta}{\bar{\zeta}}
\nc{\hcm}{\ \hat{,}\ } \nc{\hts}{\hat{\otimes}}
\nc{\barot}{{\otimes}} \nc{\free}[1]{\bar{#1}}
\nc{\uni}[1]{\tilde{#1}} \nc{\hcirc}{\hat{\circ}} \nc{\lleft}{[}
\nc{\lright}{]} \nc{\lc}{\lfloor} \nc{\rc}{\rfloor}
\nc{\curlyl}{\left \{ \begin{array}{c} {} \\ {} \end{array}
    \right .  \!\!\!\!\!\!\!}
\nc{\curlyr}{ \!\!\!\!\!\!\!
    \left . \begin{array}{c} {} \\ {} \end{array}
    \right \} }
\nc{\longmid}{\left | \begin{array}{c} {} \\ {} \end{array}
    \right . \!\!\!\!\!\!\!}
\nc{\onetree}{\bullet} \nc{\ora}[1]{\stackrel{#1}{\rar}}
\nc{\ola}[1]{\stackrel{#1}{\la}}
\nc{\ot}{\otimes} \nc{\mot}{{{\boxtimes\,}}}
\nc{\otm}{\overline{\boxtimes}} \nc{\sprod}{\bullet}
\nc{\scs}[1]{\scriptstyle{#1}} \nc{\mrm}[1]{{\rm #1}}
\nc{\margin}[1]{\marginpar{\rm #1}}   
\nc{\dirlim}{\displaystyle{\lim_{\longrightarrow}}\,}
\nc{\invlim}{\displaystyle{\lim_{\longleftarrow}}\,}
\nc{\mvp}{\vspace{0.3cm}} \nc{\tk}{^{(k)}} \nc{\tp}{^\prime}
\nc{\ttp}{^{\prime\prime}} \nc{\svp}{\vspace{2cm}}
\nc{\vp}{\vspace{8cm}} \nc{\proofbegin}{\noindent{\bf Proof: }}
\nc{\proofend}{$\blacksquare$ \vspace{0.3cm}}
\nc{\modg}[1]{\!<\!\!{#1}\!\!>}
\nc{\intg}[1]{F_C(#1)} \nc{\lmodg}{\!
<\!\!} \nc{\rmodg}{\!\!>\!}
\nc{\cpi}{\widehat{\Pi}}
\nc{\sha}{{\mbox{\cyr X}}}  
\nc{\shap}{{\mbox{\cyrs X}}} 
\nc{\shpr}{\diamond}    
\nc{\shp}{\ast} \nc{\shplus}{\shpr^+}
\nc{\shprc}{\shpr_c}    
\nc{\msh}{\ast} \nc{\zprod}{m_0} \nc{\oprod}{m_1}
\nc{\vep}{\varepsilon} \nc{\labs}{\mid\!} \nc{\rabs}{\!\mid}
\nc{\sqmon}[1]{\langle #1\rangle}

\nc{\mmbox}[1]{\mbox{\ #1\ }} \nc{\dep}{\mrm{dep}} \nc{\fp}{\mrm{FP}}
\nc{\rchar}{\mrm{char}} \nc{\End}{\mrm{End}} \nc{\Fil}{\mrm{Fil}}
\nc{\Mor}{Mor\xspace} \nc{\gmzvs}{gMZV\xspace}
\nc{\gmzv}{gMZV\xspace} \nc{\mzv}{MZV\xspace}
\nc{\mzvs}{MZVs\xspace} \nc{\Hom}{\mrm{Hom}} \nc{\id}{\mrm{id}}
\nc{\im}{\mrm{im}} \nc{\incl}{\mrm{incl}} \nc{\map}{\mrm{Map}}
\nc{\mchar}{\rm char} \nc{\nz}{\rm NZ} \nc{\supp}{\mathrm Supp}

\nc{\Alg}{\mathbf{Alg}} \nc{\Bax}{\mathbf{Bax}} \nc{\bff}{\mathbf f}
\nc{\bfk}{{\bf k}} \nc{\bfone}{{\bf 1}} \nc{\bfx}{\mathbf x}
\nc{\bfy}{\mathbf y}
\nc{\base}[1]{\bfone^{\otimes ({#1}+1)}} 
\nc{\Cat}{\mathbf{Cat}}

\nc{\detail}{\marginpar{\bf More detail}
    \noindent{\bf Need more detail!}
    \svp}
\nc{\Int}{\mathbf{Int}} \nc{\Mon}{\mathbf{Mon}}
\nc{\rbtm}{{shuffle }} \nc{\rbto}{{Rota-Baxter }}
\nc{\remarks}{\noindent{\bf Remarks: }} \nc{\Rings}{\mathbf{Rings}}
\nc{\Sets}{\mathbf{Sets}} \nc{\wtot}{\widetilde{\odot}}
\nc{\wast}{\widetilde{\ast}} \nc{\bodot}{\bar{\odot}}
\nc{\bast}{\bar{\ast}} \nc{\hodot}[1]{\odot^{#1}}
\nc{\hast}[1]{\ast^{#1}} \nc{\mal}{\mathcal{O}}
\nc{\tet}{\tilde{\ast}} \nc{\teot}{\tilde{\odot}}
\nc{\oex}{\overline{x}} \nc{\oey}{\overline{y}}
\nc{\oez}{\overline{z}} \nc{\oef}{\overline{f}}
\nc{\oea}{\overline{a}} \nc{\oeb}{\overline{b}}
\nc{\weast}[1]{\widetilde{\ast}^{#1}}
\nc{\weodot}[1]{\widetilde{\odot}^{#1}} \nc{\hstar}[1]{\star^{#1}}
\nc{\lae}{\langle} \nc{\rae}{\rangle}
\nc{\lf}{\lfloor}
\nc{\rf}{\rfloor}


\def\ta1{{\scalebox{0.25}{ 
\begin{picture}(12,12)(38,-38)
\SetWidth{0.5} \SetColor{Black} \Vertex(45,-33){5.66}
\end{picture}}}}

\def\tb2{{\scalebox{0.25}{ 
\begin{picture}(12,42)(38,-38)
\SetWidth{0.5} \SetColor{Black} \Vertex(45,-3){5.66}
\SetWidth{1.0} \Line(45,-3)(45,-33) \SetWidth{0.5}
\Vertex(45,-33){5.66}
\end{picture}}}}

\def\tc3{{\scalebox{0.25}{ 
\begin{picture}(12,72)(38,-38)
\SetWidth{0.5} \SetColor{Black} \Vertex(45,27){5.66}
\SetWidth{1.0} \Line(45,27)(45,-3) \SetWidth{0.5}
\Vertex(45,-33){5.66} \SetWidth{1.0} \Line(45,-3)(45,-33)
\SetWidth{0.5} \Vertex(45,-3){5.66}
\end{picture}}}}

\def\td31{{\scalebox{0.25}{ 
\begin{picture}(42,42)(23,-38)
\SetWidth{0.5} \SetColor{Black} \Vertex(45,-3){5.66}
\Vertex(30,-33){5.66} \Vertex(60,-33){5.66} \SetWidth{1.0}
\Line(45,-3)(30,-33) \Line(60,-33)(45,-3)
\end{picture}}}}

\def\te4{{\scalebox{0.25}{ 
\begin{picture}(12,102)(38,-8)
\SetWidth{0.5} \SetColor{Black} \Vertex(45,57){5.66}
\Vertex(45,-3){5.66} \Vertex(45,27){5.66} \Vertex(45,87){5.66}
\SetWidth{1.0} \Line(45,57)(45,27) \Line(45,-3)(45,27)
\Line(45,57)(45,87)
\end{picture}}}}

\def\tf41{{\scalebox{0.25}{ 
\begin{picture}(42,72)(38,-8)
\SetWidth{0.5} \SetColor{Black} \Vertex(45,27){5.66}
\Vertex(45,-3){5.66} \SetWidth{1.0} \Line(45,27)(45,-3)
\SetWidth{0.5} \Vertex(60,57){5.66} \SetWidth{1.0}
\Line(45,27)(60,57) \SetWidth{0.5} \Vertex(75,27){5.66}
\SetWidth{1.0} \Line(75,27)(60,57)
\end{picture}}}}

\def\tg42{{\scalebox{0.25}{ 
\begin{picture}(42,72)(8,-8)
\SetWidth{0.5} \SetColor{Black} \Vertex(45,27){5.66}
\Vertex(45,-3){5.66} \SetWidth{1.0} \Line(45,27)(45,-3)
\SetWidth{0.5} \Vertex(15,27){5.66} \Vertex(30,57){5.66}
\SetWidth{1.0} \Line(15,27)(30,57) \Line(45,27)(30,57)
\end{picture}}}}

\def\th43{{\scalebox{0.25}{ 
\begin{picture}(42,42)(8,-8)
\SetWidth{0.5} \SetColor{Black} \Vertex(45,-3){5.66}
\Vertex(15,-3){5.66} \Vertex(30,27){5.66} \SetWidth{1.0}
\Line(15,-3)(30,27) \Line(45,-3)(30,27) \Line(30,27)(30,-3)
\SetWidth{0.5} \Vertex(30,-3){5.66}
\end{picture}}}}

\def\thII43{{\scalebox{0.25}{ 
\begin{picture}(72,57) (68,-128)
    \SetWidth{0.5}
    \SetColor{Black}
    \Vertex(105,-78){5.66}
    \SetWidth{1.5}
    \Line(105,-78)(75,-123)
    \Line(105,-78)(105,-123)
    \Line(105,-78)(135,-123)
    \SetWidth{0.5}
    \Vertex(75,-123){5.66}
    \Vertex(105,-123){5.66}
    \Vertex(135,-123){5.66}
  \end{picture}
  }}}

\def\thj44{{\scalebox{0.25}{ 
\begin{picture}(42,72)(8,-8)
\SetWidth{0.5} \SetColor{Black} \Vertex(30,57){5.66}
\SetWidth{1.0} \Line(30,57)(30,27) \SetWidth{0.5}
\Vertex(30,27){5.66} \SetWidth{1.0} \Line(45,-3)(30,27)
\SetWidth{0.5} \Vertex(45,-3){5.66} \Vertex(15,-3){5.66}
\SetWidth{1.0} \Line(15,-3)(30,27)
\end{picture}}}}

\def\ti5{{\scalebox{0.25}{ 
\begin{picture}(12,132)(23,-8)
\SetWidth{0.5} \SetColor{Black} \Vertex(30,117){5.66}
\SetWidth{1.0} \Line(30,117)(30,87) \SetWidth{0.5}
\Vertex(30,87){5.66} \Vertex(30,57){5.66} \Vertex(30,27){5.66}
\Vertex(30,-3){5.66} \SetWidth{1.0} \Line(30,-3)(30,27)
\Line(30,27)(30,57) \Line(30,87)(30,57)
\end{picture}}}}

\def\tj51{{\scalebox{0.25}{ 
\begin{picture}(42,102)(53,-38)
\SetWidth{0.5} \SetColor{Black} \Vertex(61,27){4.24}
\SetWidth{1.0} \Line(75,57)(90,27) \Line(60,27)(75,57)
\SetWidth{0.5} \Vertex(90,-3){5.66} \Vertex(60,27){5.66}
\Vertex(75,57){5.66} \Vertex(90,-33){5.66} \SetWidth{1.0}
\Line(90,-33)(90,-3) \Line(90,-3)(90,27) \SetWidth{0.5}
\Vertex(90,27){5.66}
\end{picture}}}}

\def\tk52{{\scalebox{0.25}{ 
\begin{picture}(42,102)(23,-8)
\SetWidth{0.5} \SetColor{Black} \Vertex(60,57){5.66}
\Vertex(45,87){5.66} \SetWidth{1.0} \Line(45,87)(60,57)
\SetWidth{0.5} \Vertex(30,57){5.66} \SetWidth{1.0}
\Line(30,57)(45,87) \SetWidth{0.5} \Vertex(30,-3){5.66}
\SetWidth{1.0} \Line(30,-3)(30,27) \SetWidth{0.5}
\Vertex(30,27){5.66} \SetWidth{1.0} \Line(30,57)(30,27)
\end{picture}}}}

\def\tl53{{\scalebox{0.25}{ 
\begin{picture}(42,102)(8,-8)
\SetWidth{0.5} \SetColor{Black} \Vertex(30,57){5.66}
\Vertex(30,27){5.66} \SetWidth{1.0} \Line(30,57)(30,27)
\SetWidth{0.5} \Vertex(30,87){5.66} \SetWidth{1.0}
\Line(30,27)(45,-3) \SetWidth{0.5} \Vertex(15,-3){5.66}
\SetWidth{1.0} \Line(15,-3)(30,27) \Line(30,57)(30,87)
\SetWidth{0.5} \Vertex(45,-3){5.66}
\end{picture}}}}

\def\tm54{{\scalebox{0.25}{ 
\begin{picture}(42,72)(8,-38)
\SetWidth{0.5} \SetColor{Black} \Vertex(30,-3){5.66}
\SetWidth{1.0} \Line(30,27)(30,-3) \Line(30,-3)(45,-33)
\SetWidth{0.5} \Vertex(15,-33){5.66} \SetWidth{1.0}
\Line(15,-33)(30,-3) \SetWidth{0.5} \Vertex(45,-33){5.66}
\SetWidth{1.0} \Line(30,-33)(30,-3) \SetWidth{0.5}
\Vertex(30,-33){5.66} \Vertex(30,27){5.66}
\end{picture}}}}

\def\tn55{{\scalebox{0.25}{ 
\begin{picture}(42,72)(8,-38)
\SetWidth{0.5} \SetColor{Black} \Vertex(15,-33){5.66}
\Vertex(45,-33){5.66} \Vertex(30,27){5.66} \SetWidth{1.0}
\Line(45,-33)(45,-3) \SetWidth{0.5} \Vertex(45,-3){5.66}
\Vertex(15,-3){5.66} \SetWidth{1.0} \Line(30,27)(45,-3)
\Line(15,-3)(30,27) \Line(15,-3)(15,-33)
\end{picture}}}}

\nc{\chicken}{{\scalebox{0.25}{ 
\begin{picture}(42,42)(8,-8)
\SetWidth{0.5} \SetColor{Black}
\Vertex(30,27){5.66} \Vertex(15,-3){5.66} \Vertex(30,-3){5.66} \Vertex(45,-3){5.66}
\put(25, 37){\scalebox{2.0}{$\sigmaup$}}
\put(40, -25){\scalebox{2.0}{$\sigmaup$}}
\put(25, -25){\scalebox{2.0}{$x$}}
\put(10, -25){\scalebox{2.0}{$\sigmaup$}}
\SetWidth{2.0}
\Line(15,-3)(30,27) \Line(45,-3)(30,27) \Line(30,27)(30,-3)
%
\end{picture}}}}

\nc{\needt}{{\scalebox{0.25}{ 
\begin{picture}(42,42)(23,-38)
\SetWidth{0.5} \SetColor{Black}
\Vertex(45,-3){5.66} \Vertex(30,-33){5.66} \Vertex(60,-33){5.66}
\put(40,10){\scalebox{2.0}{$\sigmaup$}}
\put(25, -55){\scalebox{2.0}{$\sigmaup$}}
\put(55, -55){\scalebox{2.0}{$x$}}
\SetWidth{1.0}
\Line(45,-3)(30,-33) \Line(60,-33)(45,-3)
\end{picture}}}}

\nc{\wanta}{{\scalebox{0.25}{ 
\begin{picture}(42,42)(23,-38)
\SetWidth{0.5} \SetColor{Black}
\Vertex(45,-3){5.66} \Vertex(30,-33){5.66} \Vertex(60,-33){5.66}
\put(40,10){\scalebox{2.0}{$\sigmaup$}}
\put(25, -55){\scalebox{2.0}{$\sigmaup$}}
\put(55, -55){\scalebox{2.0}{$\sigmaup$}}
\SetWidth{1.0}
\Line(45,-3)(30,-33) \Line(60,-33)(45,-3)
\end{picture}}}}

\nc{\caser}{{\scalebox{0.25}{ 
\begin{picture}(42,42)(23,-38)
\SetWidth{0.5} \SetColor{Black}
\Vertex(45,-3){5.66} \Vertex(30,-33){5.66} \Vertex(60,-33){5.66}
\put(40,10){\scalebox{2.0}{$\sigmaup$}}
\put(25, -55){\scalebox{2.0}{$x$}}
\put(55, -55){\scalebox{2.0}{$\sigmaup$}}
\SetWidth{1.0}
\Line(45,-3)(30,-33) \Line(60,-33)(45,-3)
\end{picture}}}}

\nc{\week}{{\scalebox{0.25}{ 
\begin{picture}(12,42)(38,-38)
\SetWidth{0.5} \SetColor{Black}
\Vertex(45,-3){5.66} \Vertex(45,-33){5.66}
\put(40,10){\scalebox{2.0}{$\sigmaup$}}
\put(40,-55){\scalebox{2.0}{$\sigmaup$}}
\SetWidth{1.0}
\Line(45,-3)(45,-33)
\SetWidth{0.5}
\end{picture}}}}

\nc{\fresh}{{\scalebox{0.25}{ 
\begin{picture}(12,42)(38,-38)
\SetWidth{0.5} \SetColor{Black}
\Vertex(45,-3){5.66} \Vertex(45,-33){5.66}
\put(40,10){\scalebox{2.0}{$\sigmaup$}}
\put(40,-55){\scalebox{2.0}{$x$}}
\SetWidth{1.0}
\Line(45,-3)(45,-33)
\SetWidth{0.5}
\end{picture}}}}

\nc{\cutt}{{\scalebox{0.25}{ 
\begin{picture}(12,42)(38,-38)
\SetWidth{0.5} \SetColor{Black} \Vertex(45,-30){5.66}
\put(40,-55){\scalebox{2.0}{$\sigmaup$}}
\end{picture}}}}

\nc{\layer}{{\scalebox{0.25}{ 
\begin{picture}(12,42)(38,-38)
\SetWidth{0.5} \SetColor{Black} \Vertex(45,-30){5.66}
\put(40,-55){\scalebox{2.0}{$x$}}
\end{picture}}}}

\nc{\saveu}{{\scalebox{0.25}{ 
\begin{picture}(12,42)(38,-38)
\SetWidth{0.5} \SetColor{Black}
 \Vertex(35,-30){5.66} \Vertex(55,-30){5.66}
\put(30,-55){\scalebox{2.0}{$\sigmaup$}}
\put(50,-55){\scalebox{2.0}{$\sigmaup$}}
\SetWidth{1.0}
\end{picture}}}}

\nc{\food}{{\scalebox{0.25}{ 
\begin{picture}(12,42)(38,-38)
\SetWidth{0.5} \SetColor{Black}
 \Vertex(35,-30){5.66} \Vertex(55,-30){5.66}
\put(30,-55){\scalebox{2.0}{$\sigmaup$}}
\put(50,-55){\scalebox{2.0}{$x$}}
\SetWidth{1.0}
\end{picture}}}}

\nc{\pork}{{\scalebox{0.25}{ 
\begin{picture}(12,42)(38,-38)
\SetWidth{0.5} \SetColor{Black}
 \Vertex(35,-30){5.66} \Vertex(55,-30){5.66}
\put(30,-55){\scalebox{2.0}{$x$}}
\put(50,-55){\scalebox{2.0}{$\sigmaup$}}
\SetWidth{1.0}
\end{picture}}}}

\nc{\beef}{{\scalebox{0.25}{ 
\begin{picture}(12,42)(38,-38)
\SetWidth{0.5} \SetColor{Black}
 \Vertex(35,-30){5.66} \Vertex(55,-30){5.66}\Vertex(75,-30){5.66}
\put(30,-55){\scalebox{2.0}{$\sigmaup$}}
\put(50,-55){\scalebox{2.0}{$x$}}
\put(70,-55){\scalebox{2.0}{$\sigmaup$}}
\SetWidth{2.0}
\end{picture}}}}

\nc{\noodle}{{\scalebox{0.25}{ 
\begin{picture}(12,72)(38,-38)
\SetWidth{0.5} \SetColor{Black} \Vertex(45,27){5.66}
\SetWidth{1.0} \Line(45,27)(45,-3) \SetWidth{0.5}
\Vertex(45,-33){5.66} \SetWidth{1.0} \Line(45,-3)(45,-33)
\SetWidth{0.5} \Vertex(45,-3){5.66}
\put(40,37){\scalebox{2.0}{$\sigmaup$}}
\put(40,-55){\scalebox{2.0}{$\sigmaup$}}
\put(55,-5){\scalebox{2.0}{$\sigmaup$}}
\end{picture}}}}

\nc{\QQ}{{\mathbb Q}}
\nc{\RR}{{\mathbb R}} \nc{\ZZ}{{\mathbb Z}}


\nc{\cala}{{\mathcal A}} \nc{\calb}{{\mathcal B}}
\nc{\calc}{{\mathcal C}}
\nc{\cald}{{\mathcal D}} \nc{\cale}{{\mathcal E}}
\nc{\calf}{{\mathcal F}} \nc{\calg}{{\mathcal G}}
\nc{\calh}{{\mathcal H}} \nc{\cali}{{\mathcal I}}
\nc{\call}{{\mathcal L}} \nc{\calm}{{\mathcal M}}
\nc{\caln}{{\mathcal N}} \nc{\calo}{{\mathcal O}}
\nc{\calp}{{\mathcal P}} \nc{\calr}{{\mathcal R}}
\nc{\cals}{{\mathcal S}} \nc{\calt}{{\mathcal T}}
\nc{\calu}{{\mathcal U}} \nc{\calw}{{\mathcal W}} \nc{\calk}{{\mathcal K}}
\nc{\calx}{{\mathcal X}} \nc{\CA}{\mathcal{A}}

\nc{\fraka}{{\mathfrak a}} \nc{\frakA}{{\mathfrak A}}
\nc{\frakb}{{\mathfrak b}} \nc{\frakB}{{\mathfrak B}}
\nc{\frakD}{{\mathfrak D}} \nc{\frakF}{\mathfrak{F}}
\nc{\frakf}{{\mathfrak f}} \nc{\frakg}{{\mathfrak g}}
\nc{\frakH}{{\mathfrak H}} \nc{\frakL}{{\mathfrak L}}
\nc{\frakM}{{\mathfrak M}} \nc{\bfrakM}{\overline{\frakM}}
\nc{\frakm}{{\mathfrak m}} \nc{\frakP}{{\mathfrak P}}
\nc{\frakN}{{\mathfrak N}} \nc{\frakp}{{\mathfrak p}}
\nc{\frakS}{{\mathfrak S}} \nc{\frakT}{\mathfrak{T}}
\nc{\frakX}{{\mathfrak X}}
\nc{\BS}{\mathbb{S
}}

\font\cyr=wncyr10 \font\cyrs=wncyr7
\nc{\li}[1]{\textcolor{red}{Li:#1}}
\nc{\tian}[1]{\textcolor{blue}{Tianjie: #1}}
\nc{\xing}[1]{\textcolor{purple}{Xing: #1}}


\nc{\ID}{{\rm I}} \nc{\lbar}[1]{\overline{#1}}
\nc{\bre}{{\rm bre}} \nc{\sd}{\cals} \nc{\rb}{\rm RB}
\nc{\A}{\rm A} \nc{\LL}{\rm L}
\nc{\tx}{\tilde{X}}
\nc{\col}{\Delta_{RT}} \nc{\mul}{m_{RT}} \nc{\ul}{u_{RT}} \nc{\epl}{\varepsilon_{RT}}
\nc{\hl}{H_{RT}} \nc{\arro}[1]{#1}
\nc{\px}{P_{\tx}} \nc{\pw}{P_{\mathfrak{w}}} \nc{\pl}{B^+}
\nc{\pp}{\pl} \nc{\ppp}[1]{B^+(#1)}
\nc{\dw}{\diamond_{\mathfrak{w}}} \nc{\dl}{\diamond_{\rm \ell}}
\nc{\ncshaw}{\sha^{{\rm NC}}_{\mathfrak{w}}} \nc{\ncshal}{\sha^{{\rm NC}}_{{\rm \ell}}}
\nc{\ver}{\rm V}
\nc{\ld}{l} \nc{\del}{\Delta_{{\rm \ell}}}  \nc{\epsl}{\varepsilon_{{\rm \ell}}}
 \nc{\uul}{u_{{\rm \ell}}}


\begin{document}

\title[Hopf algebras, cocycles and Rota-Baxter algebras]{Hopf algebras of rooted forests, cocyles and free Rota-Baxter algebras}
%
\author{Xing Gao}
\address{School of Mathematics and Statistics, Key Laboratory of Applied Mathematics and Complex Systems, Lanzhou University, Lanzhou, Gansu 730000, P.\,R. China}
         \email{gaoxing@lzu.edu.cn}

\author{Li Guo}
\address{Department of Mathematics and Computer Science,
         Rutgers University,
         Newark, NJ 07102, USA}
\email{liguo@rutgers.edu}

\author{Tianjie Zhang}
\address{Department of Mathematics, Lanzhou University, Lanzhou, Gansu 730000, P.\,R. China}
         \email{tjzhangmath@aliyun.com}

\date{\today}
\begin{abstract}
The Hopf algebra and the Rota-Baxter algebra are the two algebraic structures underlying the algebraic approach of Connes and Kreimer to renormalization of perturbative quantum field theory. In particular the Hopf algebra of rooted trees serves as the ``baby model" of Feynman graphs in their approach and can be characterized by certain universal properties involving a Hochschild 1-cocycle. Decorated rooted trees have also been applied to study Feynman graphs. We will continue the study of universal properties of various spaces of decorated rooted trees with such a 1-cocycle, leading to the concept of a cocycle Hopf algebra. We further apply the universal properties to equip a free Rota-Baxter algebra with the structure of a cocycle Hopf algebra or a cocycle bialgebra.
\end{abstract}

\subjclass[2010]{16W99,16S10,16T10,16T30,81R10,81T15}

\keywords{Planar rooted trees, Connes-Kreimer Hopf algebra, cocycle, Rota-Baxter algebra, operated bialgebra}

\maketitle

\tableofcontents

\setcounter{section}{0}

\allowdisplaybreaks

\section{Introduction}

This article studies the relationship between the Hopf algebra and the Rota-Baxter algebra, both fundamental algebraic structures in the Connes-Kreimer approach of the renormalization of perturbative quantum field theory~\mcite{CK,CK1}.

The concepts of a Hopf algebra originated from topology study and were built from the combination of an algebra structure and a coalgebra structure on the same linear space. Their study has a long history, a very rich theory and broad applications in mathematics and physics~\mcite{CK,Ka,MM,Sw}. The intrinsic connection of Hopf algebras with combinatorics was first revealed in the pioneering work of Joni and Rota~\mcite{JR}. Since then, many Hopf algebras has been built on various combinatorial objects, especially trees and rooted trees, such as those of Connes-Kreimer~\cite{CK,Kr}, Foissy and Holtkamp~\mcite{Fo1,Hol}, Loday-Ranco~\cite{LR} and Grossman-Larson~\cite{GL}. Hopf algebras were also built from free objects in various contexts, such as free associative algebras and the enveloping algebras of Lie algebras.
It has been observed that many combinatorial objects possess universal properties. For example, the Connes-Kreimer Hopf algebra of rooted trees has its algebra structure from a free object, namely the initial object in the category of commutative algebras with a linear operator~\mcite{Fo3,Mo}. This universal property has an interesting application in renormalization since it suggests a canonical choice for the regulation map from the Hopf algebra of rooted trees~\mcite{GPZ,KP}.
More such free objects can be found in~\mcite{BBGN,Lo3,LR,Guop}.

Another algebraic structure with strong combinatorial motivation is the Rota-Baxter algebra (first known as a Baxter algebra) ~\mcite{Ba}, defined to be an associative algebra equipped with a linear operator that generalizes the integral operator in analysis (see Definition~\mref{de:rba}).
Since the early work of mathematicians such as F.~V. Atkinson~\cite{At}, P. Cartier~\cite{Ca}, and G.-C. Rota~\cite{Ro}, Rota-Baxter algebras has experience rapid developments in recent years~\mcite{Ag,Bai,BBGN,CK,EG,EGK,GK1,Gub,Guop,GGZ,GZ,Ro2} with applications to a broad range of areas, such as quantum field theory, operads, Hopf algebras, commutative algebra, combinatorics and number theory.

Coincidently, both the Hopf algebra and the Rota-Baxter algebra appeared in the Connes-Kreimer theory of renormalization of perturbative quantum field theory~\mcite{CK,CK1}, as the two algebraic structures characterizing the domain and range respectively of the regularized characters to be renormalized. Free Rota-Baxter algebras have also been constructed on rooted forests with decorations on vertices or angles~\mcite{EG1}.

Thus it would be interesting to relate Rota-Baxter algebra with Hopf algebra in the context of combinatorics, especially in terms of rooted trees. This is the main goal of this paper. Motivated by its aforementioned applications in renormalization, we first generalize the universal property of rooted forests to obtain more general free objects in terms of decorated rooted forests.\footnote{The Connes-Kreimer Hopf algebra of rooted trees is commutative while the Hopf algebra of rooted trees considered here are noncommutative, as in the case of Foissy and Holtkamp~\cite{Fo1,Ho}. The approaches and results in the commutative and noncommutative cases are similar.} In particular, we show that a class of decorated rooted forests gives the free objects in the category of Hopf algebras with a given Hochschild 1-cocycle, called cocycle Hopf algebras (Theorem~\mref{thm:propm}). With this universal property, we can realize a free Rota-Baxter algebra as a quotient of these free cocycle Hopf algebras. Through this quotient map, we obtain a cocycle Hopf algebra or cocycle bialgebra structure on free Rota-Baxter algebras (Theorem~\mref{thm:rbbialgt}). Hopf algebra structures on free commutative Rota-Baxter algebras have been established in~\mcite{AGKO,EGh}.
\smallskip

\noindent
{\bf Convention. } Throughout this paper, let $\bfk$ be a unitary commutative ring which will be the base ring of all  modules, algebras,  coalgebras and bialgebras, as well as linear maps.
Denote by $M(X)$ (resp $S(X)$) the free monoid (resp. semigroup) generated by $X$. For any set $Y$,
denote by $\bfk Y$ the free \bfk-module with basis $Y$.

\section{Operated Hopf algebras of decorated forests}
\label{sec:CKHOPHAL}
In this section, we study operated Hopf algebra structures on various classes of decorated planar rooted trees.

The space spanned by decorated rooted forests is equipped with a Hopf algebra structure by a well-known construction of Connes and Kreimer as a baby model for their Hopf algebra of Feynman graphs~\mcite{CK,Kr}. This construction has various generalizations, including the noncommutative and decorated cases~\mcite{Fo1,Hol,Kr2,KP}.

Let $\calt$ denote the set of planar rooted trees and $S(\calt)$ the free semigroup generated by $\calt$ in which the product is the concatenation, denoted by $\mul$ and usually suppressed.
Thus an element $F$ in $M(\calt)$, called a {\bf planar rooted forest}, is a noncommutative product of planar rooted trees in $\calt$. The {\bf depth} $\dep(T)$ of a rooted tree $T$ is the maximal length of linear chains from the root to the leaves of the tree. For $F=T_{1}\cdots T_{n}$ with $n\geq 0$ and $T_1, \cdots, T_n\in \calt$, we define
$$\bre(F):=n\,\text{ and }\, \dep(F):=\max\big\{\dep(T_{i})\mid i=1,\cdots, n\big\}$$
to be the {\bf breadth} and
{\bf depth} of $F$ respectively. Adding to $S(\calt)$ the empty planar rooted tree $1$, we obtain the free monoid $\calf:=M(\calt)$.
We will use the convention that $\bre(1)=0$.
For $F\in S(\calt)$, we use $B^+(F)=\lc F\rc$ to denote the grafting of $F$, by adding a new root to $F$. Also define $B^+(1)=\bullet.$

For a set $X$, let $\calt(X)$ (resp. $\calf(X):=M(\calt(X))$) denote the set of rooted trees (resp. forests) whose vertices are decorated by elements of $X$.
Let $\hck(X):= \bfk \vdf(X)$ be the free $\bfk$-module generated by the set $\vdf(X)$, where $X$ will be dropped when $X$ is a singleton, giving the undecorated forests.
For $x\in X$, let
$$B^+_x:\hck(X)\to \hck(X)$$
be the grafting map sending $1$ to $\bullet_x$ and sending a rooted forest in $\hck(X)$ to its grafting with the new root decorated by $x$.

We recall the construction~\mcite{Fo1,Hol} of the noncommutative Connes-Kreimer Hopf algebra $\hck:=\hck(X)$.
A {\bf subforest} of a planar rooted tree $T\in\calt(X)$ is the forest
consisting of a set of vertices of $T$ together with their descents and edges connecting all these vertices. Let $\calf_{T}$ be the set of subforests of $T$, including the empty tree 1 and the full subforest $T$.
We define
\begin{equation}
\col (T):=\sum_{F\in \calf_{T}}F\otimes(T/F),
\mlabel{eq:eqDeLCK} \notag
\end{equation}
where $T/F$ is obtained by removing the subforest $F$ and edges connecting $F$ to the rest of the tree. Here we use the convention that $T/F=1$ when $F=T$, and $T/F=T$ when $F=1$.
The coproduct $\col$ is also defined by $\col(1) = 1\ot 1$ and the cocycle condition for $B^+$~\mcite{CK}:
\begin{equation}
\col B^+_x = B^+_x \otimes 1+ (\id\otimes B^+_x)\col.
\mlabel{eq:eqiterated}
\end{equation}
In particular,
$$ \col (\bullet_x)=\bullet_x \ot 1+ 1\ot \bullet_x, x\in X.$$
For a forest $F=T_{1}\cdots T_{m}\in \vdf(X)$ with $m\geq 2$, we define
\begin{equation}
\col (F):=\col (T_{1})\cdots\col (T_{m}).
\mlabel{eq:eqlforest}
\end{equation}
Then, for $F:=\bullet_{x_{1}}\cdots\bullet_{x_{m}}, x_i\in X, 1\leq i\leq m, m\geq 1,$
\begin{equation}
\col (F)
= \sum_{I\sqcup J = [m]} \bullet_{x_I} \ot \bullet_{x_J}.
\mlabel{eq:e0} \notag
\end{equation}
Here for a subset $I=\{i_1<\cdots <i_k\}$ of $[n]$, denote $\bullet_{x_I}=\bullet_{x_{i_1}}\cdots \bullet_{x_{i_k}}.$

Also define $\epl:\bfk\vdf(X) \to \bfk$ by taking $\epl(F) = 0$ for $F\in \calt(X)$,
and $\epl(1) = 1$ and extending by multiplicativity and linearity.
Define $\ul: {\bfk}\rightarrow \hck(X)$ to be the linear map given by
$1_{\bfk}\mapsto 1$.

Recall~\mcite{Ma} that a bialgebra $(H,m,u,\Delta,\vep)$ is called {\bf graded} if there are {\bfk}-submodules $H^{(n)}, n\geq0$, of $H$ such that
\begin{enumerate}
\item
$H=\bigoplus\limits^{\infty}_{n\geq0}H^{(n)}$; \mlabel{it:connect1}
\item
$H^{(p)}H^{(q)}\subseteq H^{(p+q)}$; \mlabel{it:connect2}
\item
$\Delta(H^{(n)})\subseteq\bigoplus\limits^{}_{p+q=n}H^{(p)}\otimes H^{(q)}, n\geq0$. \mlabel{it:connect3}
\end{enumerate}
where $p,q\geq0$. $H$ is called {\bf connected (graded)} if in addition $H^{(0)}={\bfk}$.
It is well-known that a connected bialgebra is a Hopf algebra.

\begin{theorem}\mcite{CK,Fo1,Hol}
The quintuple $(\hck, \mul, \ul, \col , \epl )$ is a connected bialgebra and hence a Hopf algebra.
\mlabel{thm:rt}
\end{theorem}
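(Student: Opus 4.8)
The plan is to check that the stated data form a connected graded bialgebra and then invoke the quoted fact that every connected bialgebra is a Hopf algebra, so that no antipode has to be produced by hand. The algebra part is immediate: $\hck = \bfk\calf(X)$ is by construction the free $\bfk$-module on the free monoid $\calf(X)$ of decorated planar rooted forests, with $\mul$ the concatenation product and $\ul$ selecting the empty forest $1$ as identity, so $(\hck,\mul,\ul)$ is associative and unital. The counit axioms are equally direct from $\col(T) = \sum_{F\in\calf_T} F\ot(T/F)$: since $\epl$ kills every nonempty forest and $\epl(1)=1$, applying $\epl\ot\id$ (resp. $\id\ot\epl$) leaves only the term with $F=1$ (resp. $F=T$) and returns $T$; multiplicativity of $\col$ and $\epl$ then propagates this to all forests.

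The core of the argument is coassociativity $(\col\ot\id)\col = (\id\ot\col)\col$, which I would prove by induction on the number of vertices of a forest, taking the cocycle identity \eqref{eq:eqiterated} as the recursive description of $\col$ on trees. The inductive step splits into two cases. If $F = T_1\cdots T_m$ with $m\geq 2$, each factor has strictly fewer vertices, so coassociativity holds for each $T_i$ by induction; since $\col$ is an algebra morphism by \eqref{eq:eqlforest}, the maps $(\col\ot\id)\col$ and $(\id\ot\col)\col$ are algebra morphisms into $\hck^{\ot 3}$ and hence agree on the product $F$ once they agree on the factors. If $F = T = B^+_x(F')$ is a single tree, I would expand both sides starting from $\col T = T\ot 1 + (\id\ot B^+_x)\col F'$, applying $\col\ot\id$ and $\id\ot\col$ and re-expanding the inner $\col B^+_x$ by \eqref{eq:eqiterated}; after the dust settles all terms coincide except one, which on the two sides equals $(\id\ot\id\ot B^+_x)(\col\ot\id)\col F'$ and $(\id\ot\id\ot B^+_x)(\id\ot\col)\col F'$, and these agree by the induction hypothesis applied to $F'$, which has one fewer vertex. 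This case is the main obstacle: it is essentially a disciplined Sweedler-notation computation of the threefold coproduct, and the cocycle identity is precisely what makes the recursion close.

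Finally I would record the bialgebra compatibilities and the grading. That $\col$ and $\epl$ are algebra morphisms is built into \eqref{eq:eqlforest} and the multiplicativity of $\epl$, while the unit compatibilities $\col\ul = \ul\ot\ul$ and $\epl\ul = \id_\bfk$ amount to $\col(1)=1\ot 1$ and $\epl(1)=1$. Grading $\hck$ by the number of vertices, with $\hck^{(n)}$ the span of forests having exactly $n$ vertices, one has $\hck=\bigoplus_{n\geq 0}\hck^{(n)}$; concatenation adds vertices, giving $\hck^{(p)}\hck^{(q)}\subseteq\hck^{(p+q)}$; and in each term $F\ot(T/F)$ of $\col T$ the vertices of $T$ are partitioned between $F$ and $T/F$, so $\col(\hck^{(n)})\subseteq\bigoplus_{p+q=n}\hck^{(p)}\ot\hck^{(q)}$, the count extending from trees to forests by multiplicativity. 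Since $\hck^{(0)}=\bfk 1=\bfk$, the bialgebra $\hck$ is connected graded, and the quoted criterion then yields the Hopf algebra structure.
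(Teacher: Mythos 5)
The paper offers no proof of this theorem at all: it is quoted from the literature \mcite{CK,Fo1,Hol}, so there is no internal argument to compare against. Your proof is correct, and it is essentially the standard argument of those references: the algebra and counit axioms are routine, coassociativity is proved by induction on the number of vertices with the cocycle identity~(\mref{eq:eqiterated}) closing the recursion, and the Hopf structure comes for free from connectedness of the vertex grading. Your Sweedler computation in the single-tree case is right: writing $T=B^+_x(F')$, both $(\col\ot\id)\col T$ and $(\id\ot\col)\col T$ expand to $T\ot 1\ot 1 + \big((\id\ot B^+_x)\col F'\big)\ot 1$ plus one remaining term, equal respectively to $(\id\ot\id\ot B^+_x)(\col\ot\id)\col F'$ and $(\id\ot\id\ot B^+_x)(\id\ot\col)\col F'$, which agree by the induction hypothesis on $F'$. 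One point of hygiene worth flagging: you use the subforest formula for the counit axioms and the compatibility of $\col$ with the grading, but the recursive cocycle description for coassociativity. The paper asserts that these two descriptions define the same coproduct, so mixing them is legitimate within its framework; a fully self-contained proof would either verify that equivalence (itself an induction) or run the counit and grading checks through the same recursive description, which is straightforward since $\epl$ kills the image of $B^+_x$ and $B^+_x$ raises the vertex degree by one.
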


The pivotal role played by $B^+$ is further clarified by the following universal property.

\begin{prop}~\mcite{Fo3,Mo}
Let $A$ be an algebra and let $L:A\to A$ be a linear map. There exists a unique algebra homomorphism $\phi:\hck\to A$ such that $\phi B^+=L \phi$.
\mlabel{pp:init}
\end{prop}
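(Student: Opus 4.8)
The plan is to exploit the fact that, as a unital associative algebra, $\hck$ is the \emph{free} object on the set $\calt$ of planar rooted trees. Indeed, since $\hck=\bfk M(\calt)$ is the monoid algebra of the free monoid $M(\calt)=\calf$, an algebra homomorphism out of $\hck$ amounts to an \emph{unconstrained} choice of the images of the trees $T\in\calt$, with the values on a forest then forced by multiplicativity and unitality, namely $\phi(T_{1}\cdots T_{n})=\phi(T_{1})\cdots\phi(T_{n})$ and $\phi(1)=1_{A}$. The second structural fact I would use is that the grafting map $B^{+}\colon\calf\to\calt$, $F\mapsto\lc F\rc$, is a bijection: every nonempty tree $T$ has a unique root whose deletion returns the unique forest $F$ with $T=B^{+}(F)$, and this forest has one fewer vertex than $T$. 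Consequently the constraint $\phi B^{+}=L\phi$ pins down the value of $\phi$ on each generator $T=B^{+}(F)$ to be $\phi(T)=L(\phi(F))$, which already forces uniqueness.

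For existence I would construct $\phi$ by a well-founded recursion on the number of vertices of a forest. Set $\phi(1)=1_{A}$. For a tree $T$, write $T=B^{+}(F)$ for its unique underlying forest $F$ and put $\phi(T):=L(\phi(F))$, noting that $F$ has strictly fewer vertices than $T$; for a forest $F=T_{1}\cdots T_{n}$ with $\bre(F)=n\geq 2$ put $\phi(F):=\phi(T_{1})\cdots\phi(T_{n})$, where each $T_{i}$ again has strictly fewer vertices than $F$. Since every recursive call strictly decreases the number of vertices, this determines $\phi$ on the basis $\calf$ unambiguously, and I then extend it $\bfk$-linearly to $\hck$.

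Next I would verify the required properties. Multiplicativity $\phi(FG)=\phi(F)\phi(G)$ is immediate because the concatenation of the tree-factor decompositions of $F$ and $G$ is that of $FG$, while the defining clause expresses $\phi$ of a forest as the ordered product in $A$ of $\phi$ on its tree factors (using associativity of $A$); unitality is the base clause $\phi(1)=1_{A}$. The commutation identity $\phi B^{+}=L\phi$ holds on every basis forest $F$ by the very definition $\phi(B^{+}(F))=L(\phi(F))$, including $F=1$, where $B^{+}(1)=\bullet$ and $\phi(\bullet)=L(1_{A})$, and it then holds on all of $\hck$ by linearity. For uniqueness, any homomorphism $\psi$ with $\psi B^{+}=L\psi$ must satisfy $\psi(1)=1_{A}$, $\psi(B^{+}(F))=L(\psi(F))$ and $\psi(T_{1}\cdots T_{n})=\prod_{i}\psi(T_{i})$, which are precisely the clauses defining $\phi$; an induction on the number of vertices gives $\psi=\phi$ on $\calf$, hence on $\hck$.

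I expect no serious obstacle. The only point requiring genuine care is that the recursion is well-founded and that its two clauses (concatenation and grafting) are mutually consistent; both are guaranteed by the bijection $B^{+}\colon\calf\to\calt$ together with the strict drop in vertex number at each step. Once this is in place, the homomorphism and uniqueness verifications are routine, so the heart of the argument is the structural observation that $(\hck,B^{+})$ is freely built from the empty forest by concatenation and grafting.
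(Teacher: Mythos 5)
Your proof is correct. Note that the paper does not actually prove Proposition~\mref{pp:init} itself---it is quoted from~\mcite{Fo3,Mo}---but the paper does prove the decorated generalization, Theorem~\mref{thm:propm}, of which this proposition is essentially the case of an empty decoration set (the initial object in the category of operated unitary algebras). The paper's argument there is organized differently from yours: it filters the decorated forests by operator depth, via the recursion $M_{n+1}=M(\bullet_X\sqcup B^+(M_n))$, constructs a monoid homomorphism $\lbar{f}_n$ on each level by invoking the universal property of the free monoid, and passes to the direct limit; freeness as an operated algebra then follows by linearization. Your argument instead rests on two structural facts stated up front---that $\calf=M(\calt)$ is the free monoid on $\calt$ (so $\hck$ is the free unital algebra on the trees) and that $B^+:\calf\to\calt$ is a bijection---and runs a single well-founded recursion on the number of vertices, with uniqueness falling out of the same recursion. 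Both proofs exploit the same recursive structure of planar rooted forests (a tree is uniquely the grafting of a forest, a forest is uniquely a word in trees), but yours is more elementary and self-contained, avoiding the filtration and direct limit entirely; the paper's formulation buys generality, since the depth filtration accommodates leaf decorations by $X$ and produces the map as a limit of monoid homomorphisms, which is the form needed for the later parts of Theorem~\mref{thm:propm}. The only points in your write-up that deserve explicit care---well-foundedness of the recursion and the consistency of the grafting and concatenation clauses---are indeed settled by the bijectivity of $B^+$ and the strict drop in vertex count, exactly as you say.
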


This leads to the concept of an algebra equipped with one or multiple linear operators. Such a concept was first introduced by Kurosh~\mcite{Ku} by the name of an $\Omega$-algebra. The free objects were constructed in~\mcite{Guop} in terms of Motzkin paths, bracketed words and rooted trees. There, such a structure was called an operated algebra.

\begin{defn}\cite{Guop}
Let $\Omega$ be a set.
An {\bf $\Omega$-operated semigroup} (resp. {\bf $\Omega$-operated monoid}, resp. {\bf $\Omega$-operated algebra}) is a semigroup (resp. monoid, resp. operated algebra) $U$ together with maps (resp. maps, resp. linear maps) $\alpha_\omega:U\rightarrow  U, \omega\in \Omega$.
\end{defn}
With the apparently defined morphisms, we obtain the category of $\Omega$-operated semigroups (resp. $\Omega$-operated monoids, resp. $\Omega$-operated algebras).
When $\Omega$ is a singleton, the prefix $\Omega$ will be omitted.

In this context, the universal property of $\hck$ in Proposition~\mref{pp:init} is that $(\hck,B^+)$ is the initial object in the category of operated (unitary) algebras.
This fact can be easily generated to the case when the trees and forests have their leaves decorated by a set $X$, namely,

\begin{prop}\mcite{KP}
The $X$-operated algebra $(\hck(X), \{B^+_x\,|\, x\in X\})$ is the initial object in the category of $X$-operated algebras.
\mlabel{pp:initx}
\end{prop}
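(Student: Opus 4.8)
The plan is to follow the pattern of Proposition~\mref{pp:init}, now exploiting the recursive structure of \emph{decorated} forests. Fix an arbitrary $X$-operated algebra $(A,\{P_x\mid x\in X\})$, where each $P_x\colon A\to A$ is the prescribed linear operator; I must produce a \emph{unique} morphism of $X$-operated algebras $\phi\colon\hck(X)\to A$, that is, a unital algebra homomorphism satisfying $\phi B^+_x=P_x\phi$ for every $x\in X$. Two structural observations drive the whole argument. First, since $\vdf(X)=M(\calt(X))$ is the free monoid on $\calt(X)$ under concatenation, the algebra $\hck(X)=\bfk\vdf(X)$ is the free associative unital algebra on the set $\calt(X)$ of decorated trees; hence an algebra homomorphism out of $\hck(X)$ is determined by, and may be freely prescribed through, its values on trees. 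Second, every $T\in\calt(X)$ is \emph{uniquely} of the form $T=B^+_x(F)$, where $x\in X$ is the decoration of the root and $F\in\vdf(X)$ is the forest obtained by deleting that root. This unique-decomposition fact is what makes the recursion below well founded.

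For existence, I would define $\phi$ on the basis $\vdf(X)$ by induction on the number of vertices (equivalently, a lexicographic induction on $\dep$ and $\bre$): set $\phi(1)=1_A$; for a tree $T=B^+_x(F)$ put $\phi(T):=P_x(\phi(F))$, legitimate because $F$ has one fewer vertex than $T$; and for a forest $F=T_1\cdots T_n$ with $n\geq 2$ put $\phi(F):=\phi(T_1)\cdots\phi(T_n)$, each $T_i$ having strictly fewer vertices than $F$. Extending $\phi$ by $\bfk$-linearity to $\hck(X)$, it is a unital algebra homomorphism by construction. The operated condition $\phi B^+_x=P_x\phi$ then holds on each basis forest $F$ because $\phi(B^+_x(F))=P_x(\phi(F))$ is precisely the defining clause for the tree $B^+_x(F)$ (including $F=1$, where $B^+_x(1)=\bullet_x$ and $\phi(\bullet_x)=P_x(1_A)$), and it extends to all of $\hck(X)$ by linearity.

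For uniqueness, suppose $\psi\colon\hck(X)\to A$ is any morphism of $X$-operated algebras. Unitality forces $\psi(1)=1_A$; multiplicativity forces $\psi(T_1\cdots T_n)=\psi(T_1)\cdots\psi(T_n)$; and the operated condition forces $\psi(B^+_x(F))=P_x(\psi(F))$. These three constraints are exactly the defining clauses of $\phi$, so the same induction on vertex count yields $\psi=\phi$ on the basis $\vdf(X)$, hence on $\hck(X)$.

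I do not expect a genuine obstacle: the entire content is the recursive, unique-decomposition structure of decorated planar forests, which is the decorated analogue of the undecorated statement already recorded in Proposition~\mref{pp:init}. The only point deserving care is to confirm that the three forcing conditions (unital, multiplicative, intertwining) pin down a homomorphism on every forest without ambiguity, i.e. that the recursion terminates and that the decomposition $T=B^+_x(F)$ leaves no choice of $x$ or $F$; this is exactly the second observation above, so it should present no difficulty.
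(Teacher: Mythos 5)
Your proof is correct. Note first that the paper never proves Proposition~\mref{pp:initx} at all---it is quoted from the reference [KP]---so the only in-paper argument to measure yours against is the proof of the analogous universal property in Theorem~\mref{thm:propm}(\mref{it:fomonoid})--(\mref{it:fualg}) for the leaf-decorated forests $\ldf(\tx)$. That proof is organized differently from yours: it filters the forests by operator depth, $M_0=M(\bullet_{\tx})$ and $M_{n+1}=M(\bullet_X\sqcup B^+(M_n))$, builds the required morphism level by level by invoking the universal property of a \emph{free monoid} at each stage, and then passes to the direct limit $\dirlim M_n$. Your argument instead runs one global induction on vertex count, resting on exactly two structural facts: that $\hck(X)=\bfk\vdf(X)$ is the free unital associative algebra on the set $\calt(X)$, and that grafting gives a bijection $\vdf(X)\times X\to\calt(X)$, $(F,x)\mapsto B^+_x(F)$. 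The two routes encode the same structural recursion, and both existence and uniqueness come out correctly in yours; the vertex-count version is arguably more self-contained, since it derives existence and uniqueness symmetrically from the three forcing clauses (unitality, multiplicativity, intertwining), whereas the paper's depth filtration has the advantage of being the reusable machinery that the paper also needs elsewhere (it is the same inductive scaffolding used to define $\ldf(\tx)$, to prove Theorem~\mref{thm:propm}, and to match against the bracketed-word filtration $\frakM_n$ in Section~\mref{sec:RBHOPHAL}). The one point you rightly flag as needing care---that the decomposition $T=B^+_x(F)$ leaves no choice of $x$ or $F$---is indeed the crux, and it holds for planar rooted trees with all vertices decorated, so there is no gap.
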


Thus decorated rooted trees give a combinatorial construction of the initial object, that is, the free object generated by the empty set. It is natural to ask how to use rooted trees to construct other free objects in the category of $\Omega$-operated algebras.
In this direction, the free semigroup $S(\calt(X))$, that is, the set of non-empty decorated rooted forests, is shown to give the free object in the category of operated semigroups and that of operated nonunitary algebras.

\begin{prop} \mcite{Guop} Let $j_{X}: X\hookrightarrow \calf(X), ~x \mapsto \bullet_x$ be the natural embedding and $\cdot$ the concatenation product. Then
\begin{enumerate}
\item
The quadruple $(S(\calt(X)), \,\cdot,\,B^+,j_X)$ is the free operated semigroup on $X$.
\item
The quadruple $({\bfk}S(\calt(X)), \,\cdot,\, B^+, j_X)$ is the free operated non-unitary algebra on $X$.
\end{enumerate}
\mlabel{pp:propsg}
\end{prop}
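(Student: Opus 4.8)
The plan is to verify the two universal properties directly, constructing in each case the unique extending morphism by induction on the size of a forest, with the construction forced by the defining relations of an operated structure.

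For part (a), let $(U,\cdot_U,P)$ be an arbitrary operated semigroup, where $P\colon U\to U$ is the structure map, and let $f\colon X\to U$ be any map of sets. I must produce a unique morphism of operated semigroups $\bar f\colon S(\calt(X))\to U$ satisfying $\bar f\circ j_X=f$. The construction rests on two structural facts about $S(\calt(X))$: first, it is the free semigroup on the set $\calt(X)$, so each forest $F$ has a unique factorization $F=T_1\cdots T_n$ with $T_i\in\calt(X)$; second, each tree $T\in\calt(X)$ is exactly one of $\bullet_x$ for a unique $x\in X$, or $B^+(F')$ for a unique nonempty forest $F'\in S(\calt(X))$ recovered by deleting the root of $T$. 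Using these, I define $\bar f$ by induction on the number of vertices of $F$: on trees set $\bar f(\bullet_x)=f(x)$ and $\bar f\big(B^+(F')\big)=P\big(\bar f(F')\big)$, and on a forest of breadth $n\ge 2$ set $\bar f(T_1\cdots T_n)=\bar f(T_1)\cdot_U\cdots\cdot_U\bar f(T_n)$. This is well posed since $F'$ has one fewer vertex than $B^+(F')$ and each $T_i$ has strictly fewer vertices than $F$.

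I then check the three requirements. Compatibility with concatenation, $\bar f(FG)=\bar f(F)\cdot_U\bar f(G)$, follows from uniqueness of factorization together with associativity of $\cdot_U$; compatibility with the operator, $\bar f\big(B^+(F')\big)=P\big(\bar f(F')\big)$, is immediate from the definition since $B^+(F')$ is a tree; and $\bar f\circ j_X=f$ holds because $\bar f(\bullet_x)=f(x)$. For uniqueness, any operated-semigroup morphism $g$ with $g\circ j_X=f$ must obey $g(\bullet_x)=f(x)$, $g\big(B^+(F')\big)=P\big(g(F')\big)$ and $g(T_1\cdots T_n)=g(T_1)\cdot_U\cdots\cdot_U g(T_n)$, and these relations determine $g$ by the same induction, so $g=\bar f$. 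Part (b) then follows by linearization: given an operated non-unitary algebra $(A,\cdot_A,P_A)$ and a map $f\colon X\to A$, I apply part (a) to the multiplicative semigroup underlying $A$ to obtain a set map $S(\calt(X))\to A$, and extend it $\bfk$-linearly to $\bar f\colon \bfk S(\calt(X))\to A$. Since the product and the operator $B^+$ on $\bfk S(\calt(X))$ are obtained from their values on the basis $S(\calt(X))$ by bilinear and linear extension, $\bar f$ is automatically an algebra homomorphism commuting with the operators, and uniqueness is inherited from part (a) because the values on the basis are forced.

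The step I expect to require the most care is the well-foundedness and consistency of the recursion in part (a): one must confirm that the decomposition of a tree as either $\bullet_x$ or $B^+(F')$ is genuinely exclusive and unique --- equivalently, that $B^+$ restricts to a bijection from $S(\calt(X))$ onto the non-single-vertex trees and that distinct $x$ give distinct $\bullet_x$ --- and that the chosen induction measure strictly decreases under both the factorization and the root-deletion reductions, so that the two defining clauses can never clash. Once these combinatorial facts, which encode the freeness of the forest construction, are pinned down, the morphism property and the uniqueness become routine verifications.
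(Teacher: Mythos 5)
Your proof is correct, but be aware that the paper never actually proves this proposition: it is imported from \cite{Guop}, and the only argument of this kind written out in the paper is the proof of its monoid/unitary analogue, Theorem \ref{thm:propm}(a)--(b), which is organized differently from yours. There the free object is presented as the union of a depth filtration $M_0\subseteq M_1\subseteq\cdots$, with $M_0=M(\bullet_{\tx})$ and $M_{n+1}=M(\bullet_X\sqcup B^+(M_n))$; the morphism is built level by level, invoking at each stage the universal property of the free monoid on the generating set $\bullet_X\sqcup B^+(M_n)$ (so multiplicativity never has to be checked by hand and the recursion runs on depth), and one then passes to the direct limit. Your proof instead runs a single induction on the number of vertices, with well-definedness resting on unique factorization of a forest into trees together with unique root-deletion on trees, and with multiplicativity and uniqueness verified explicitly. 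The two arguments rest on the same structural facts; yours is more elementary and self-contained, while the filtration scheme buys automatic compatibility with the product at the cost of setting up the tower and the limit. Your reduction of (b) to (a) by linearization agrees with the paper, which disposes of its item (b) in one sentence.

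One substantive remark: you rightly isolate as the crux the claim that $B^+$ restricts to a bijection from $S(\calt(X))$ onto the set of trees with more than one vertex. With the paper's literal definition of $\calt(X)$ (every vertex decorated by an element of $X$) this claim is false --- a tree whose root carries a decoration $y\in X$ and has nonempty branches is not $B^+$ of anything, since $B^+$ attaches an undecorated root --- so the proposition as transcribed is notationally loose. It is true, and your argument proves it, for forests whose leaves are decorated by $X$ and whose internal vertices carry the operator symbol $\sigmaup$; this is the convention of \cite{Guop} and matches the paper's own $\calt_\ell(\tx)$ used in Theorem \ref{thm:propm}.
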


To obtain the corresponding statement for (unitary) algebras, we make the following adjustment. See also~\mcite{Guop} where rooted forests with angular decorations are used for this purpose.

Let $X$ be a set and let $\sigmaup$ be a symbol not in the set $X$. Denote $\tx:=X\cup\{\sigmaup\}$. Let
$\calt_\ell(\tx)$ (resp. $\ldf(\tx)$)
denote the subset of $\calt(\tx)$ (resp. $\vdf(\tx)$) consisting of vertex decorated trees (resp. forests) where elements of $X$ decorate the leaves only. In other words, all internal vertices, as well as possibly some of the leaf vertices, are decorated by $\sigmaup$. As can be easily observed, the space
$$\hlf(\tx):=\bfk \ldf(\tx)=\bfk M(\calt_\ell(\tx))$$
is closed under the forest concatenation $\mul$ and the coproduct $\col$.
Here are some examples of the restricted coproduct.

\begin{exam}
For $x,\sigmaup\in \tilde{X}$, we have
\begin{enumerate}
\item
$\col(\cutt)=\cutt\,\,\ot\,\,1\,\,+\,\,1\,\,\ot\,\,\cutt$~.
\bigskip
\item
$\col (\week)=\week\ot\,\,1\,\,+\,\,\cutt\,\,\ot\,\,\cutt\,\,+\,\,1\,\,\ot\,\,\week$~.
\bigskip
\item
$\col (\fresh)=\fresh\ot\,\,1\,\,+\,\,\layer\,\,\ot\,\,\cutt\,\,+\,\,1\,\,\ot\,\,\fresh$~.
\bigskip
\item
$\col (\caser)=\caser\ot\,\,1\,\,+\,\,\layer\,\,\ot\,\,\week\,\,
+\,\,\cutt\,\,\ot\,\,\fresh\,\,+\,\,\pork\,\,\ot\,\,\cutt\,\,+\,\,1\,\,\ot\,\,\caser$~.
\end{enumerate}
\end{exam}

\bigskip

Further the counit $\epl$ on $\hck(\tx)$ restricts to $\epl:\hlf(\tx)\to \bfk$. Thus as a direct consequence of Theorem~\mref{thm:rt}, we obtain

\begin{coro}
For any set $X$, the quintuple $(\hlf(\tx),\mul,\ul,\col,\epl)$ is a connected subbialgebra of $\hck(\tx)$ and hence a Hopf algebra.
\mlabel{co:hopfx}
\end{coro}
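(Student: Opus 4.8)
The plan is to deduce the result directly from Theorem~\mref{thm:rt}: I would show that $\hlf(\tx)$ is a connected graded sub-bialgebra of $\hck(\tx)$ and then invoke the general fact, recalled just before Theorem~\mref{thm:rt}, that a connected bialgebra is automatically a Hopf algebra. In particular the antipode need not be produced by hand; it is furnished by the recursive formula available in any connected graded bialgebra and agrees with the restriction of the antipode of $\hck(\tx)$.

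First I would record the easy closure properties. Since $\ldf(\tx)\subseteq\vdf(\tx)$ is a subset of the distinguished basis, $\hlf(\tx)$ is a $\bfk$-submodule of $\hck(\tx)$; it contains the unit $1$ (the empty forest, for which the leaf-decoration condition holds vacuously) and is visibly closed under the concatenation product $\mul$, since juxtaposing two forests whose $X$-decorations sit only on leaves produces another such forest. The counit $\epl$ restricts to $\hlf(\tx)$ by its definition, so $\ul$ and $\epl$ pose no difficulty, and the only substantive point is closure under the coproduct $\col$.

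The heart of the argument, and the step I expect to be the main obstacle, is to verify $\col\big(\hlf(\tx)\big)\subseteq\hlf(\tx)\ot\hlf(\tx)$. By the multiplicativity~\eqref{eq:eqlforest} of $\col$ on forests together with the closure of $\hlf(\tx)$ under $\mul$ just noted, it suffices to treat a single tree $T\in\calt_\ell(\tx)$ and to show that for every subforest $F\in\calf_T$ both tensor factors $F$ and $T/F$ again lie in $\ldf(\tx)$. Here one uses that a subforest is a set of vertices closed under passing to descendants. For the left factor $F$: an $X$-decorated vertex of $T$ is by hypothesis a leaf of $T$, hence has no descendants and remains a leaf of $F$, while every internal vertex of $F$ keeps all its children from $T$ and so is an internal vertex of $T$, thus decorated by $\sigmaup$. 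For the right factor $T/F$, the subtle point is that deleting $F$ can turn an internal vertex of $T$ into a leaf of $T/F$; but such a newly created leaf was internal in $T$ and therefore carries the decoration $\sigmaup$, whereas every surviving $X$-decorated vertex was already a leaf of $T$ and can only lose, never gain, children under the deletion, hence stays a leaf. In both factors the $X$-decorations remain confined to leaves, giving $F,\,T/F\in\ldf(\tx)$ as required.

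Finally I would restrict the connected grading underlying Theorem~\mref{thm:rt} (by the number of vertices). Since this grading is by homogeneous basis elements, it restricts to a grading of the submodule $\hlf(\tx)$ that is compatible with $\mul$ and $\col$, and its degree-zero component is $\bfk 1$, so $\hlf(\tx)$ is connected. Being a connected graded bialgebra, it is a Hopf algebra, which completes the proof.
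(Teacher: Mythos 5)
Your proposal is correct and follows essentially the same route as the paper: the paper simply notes that $\hlf(\tx)=\bfk M(\calt_\ell(\tx))$ is ``easily observed'' to be closed under $\mul$ and $\col$, that $\epl$ restricts, and that the corollary is then a direct consequence of Theorem~\ref{thm:rt}. You have merely supplied the details of that observation (the subforest/quotient-tree analysis of why $X$-decorations stay on leaves in both tensor factors, and the restriction of the vertex-count grading), which the paper leaves implicit.
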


Now we conceptualize the combination of operated algebras and Hopf algebras, motivated by the considerations in~\mcite{CK,Mo}. It applies to $\Omega$-operated algebras for any $\Omega$. For simplicity, we only consider the case when $\Omega$ is a singleton here and for the rest of the paper. For a related concept in the study of multi-variable integration, see~\mcite{RGG}.

\begin{defn}
\begin{enumerate}
\item An {\bf operated bialgebra} is a bialgebra $(H,m,u,\Delta, \varepsilon)$ which is also an operated algebra $(H,P)$.
\item
A {\bf cocycle bialgebra} is an operated bialgebra $(H,m,u,\Delta, \varepsilon)$ that satisfies the cocycle condition:
\begin{equation}
\Delta P=P\ot 1 + (\id\ot P)\Delta.
\mlabel{eq:cocycle}
\end{equation}
If the bialgebra in a cocycle bialgebra is a Hopf algebra, then it is called a {\bf cocycle Hopf algebra}.
\item
The {\bf free cocycle bialgebra on a set $X$} is a cocycle bialgebra $(H_X,m_X,u_X,\Delta_X, \varepsilon_X,P_X)$ together with a set map $j_X:X\to H$ with the property that, for any cocycle bialgebra $(H,m,u, \Delta, \varepsilon,P)$ and set map $f:X\to H$ whose images are primitive (that is, $\Delta(f(x))=f(x)\ot 1+1\ot f(x)$), there is a unique morphism $\free{f}:H_X\to H$ of operated bialgebras such that $\free{f} j_X=f$. The concept of a {\bf free cocycle Hopf algebra} is defined in the same way.
\end{enumerate}
\mlabel{de:decHopf}
\end{defn}

As L.~Foissy kindly showed us (Theorem~\mref{thm:propm}.(\mref{it:fuhopf})), the concept of a free cocycle Hopf algebra turns out to coincide with that of a free cocycle bialgebra.

Returning to the Hopf algebra $H_{RT}(\tx)=\bfk \calf(\tx):= \bfk M(\calt(\tx))$, with the operator $B^+:=B^+_\sigmaup$,
we obtain a cocycle Hopf algebra.

We next turn our attention to $\hlf(\tx)=\bfk \calf_\ell(\tx)$, beginning with a recursive structure on $\calf_\ell(\tx)$.
Denote $\bullet_{\tx}:=\{\bullet_{x}\mid x\in \tx\}$ and set
\begin{equation}
M_{0}:=M(\bullet_{\tx}) = S(\bullet_{\tx}) \sqcup\{1\}.
\mlabel{eq:linitial}\notag
\end{equation}
Here $M(\bullet_{\tx})$ (resp. $S(\bullet_{\tx})$) denotes the submonoid (resp. subsemigroup) of $\calf(\tx)$ generated by $\bullet_{\tx}$, which is also isomorphic to the free monoid (resp. semigroup) generated by $\bullet_{\tx}$, justifying the abuse of notations. Assume that $M_n, n\geq 0,$ has been defined, then define
\begin{equation}
M_{n+1}:=M(\bullet_{X}\sqcup \ppp{M_n}).
\mlabel{eq:construct} \notag
\end{equation}
Then we have $M_n\subseteq M_{n+1}$ and define the direct limit
$$ \dirlim M_{n} = \bigcup \limits^{\infty}_{n=0}M_{n}.$$

The following result generalizes the universal properties obtained in~\mcite{CK,Mo}.

\begin{theorem} Let $j_{X}: X\hookrightarrow \ldf(\tx), ~x \mapsto \bullet_x$ be the natural embedding and $\mul$ the concatenation product. Then

\begin{enumerate}
\item
The quintuple $(\ldf(\tx), \,\mul,\, \ul, B^+ , \,j_X)$ is the free operated monoid on $X$.  \mlabel{it:fomonoid}
\item
The quintuple $({\bfk}\ldf(\tx), \,\mul,\,\ul, B^+ , \,j_X)$ is the free operated algebra on $X$. \mlabel{it:fualg}
\item
The septuple $({\bfk}\ldf(\tx), \,\mul,\, \ul,\col, \epl,\,B^+, \,j_X)$ is the free cocycle bialgebra on $X$.
\mlabel{it:fubialg}
\item
The Hopf algebra given by the connected bialgebra $({\bfk}\ldf(\tx), \,\mul,\, \ul,\col, \epl,\,B^+, \,j_X)$ is the free cocycle Hopf algebra on $X$.
\mlabel{it:fuhopf}
\end{enumerate}
\mlabel{thm:propm}
\end{theorem}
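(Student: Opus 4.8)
The plan is to prove the four parts in the listed order, since each rests on the preceding one and the essential content sits in part (\ref{it:fubialg}). For part (\ref{it:fomonoid}) I would verify the universal property of the free operated monoid directly from the recursive description $\ldf(\tx)=\dirlim M_n$. Given an operated monoid $(U,\cdot,1_U,P)$ and a map $f\colon X\to U$, define $\bar f\colon\ldf(\tx)\to U$ by recursion on the layers $M_n$: set $\bar f(1)=1_U$, $\bar f(\bullet_x)=f(x)$ for $x\in X$, $\bar f(B^+(F))=P(\bar f(F))$, and extend multiplicatively across the concatenation $\mul$. Since $\bullet_\sigmaup=B^+(1)$, this forces $\bar f(\bullet_\sigmaup)=P(1_U)$, so no independent choice is made. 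Well-definedness is exactly the unique factorization of a forest into tree factors and of each non-leaf tree as $B^+$ of its branch forest, which is what $M_{n+1}=M(\bullet_X\sqcup B^+(M_n))$ encodes; that $\bar f$ is the unique operated monoid morphism with $\bar f j_X=f$ is then immediate, paralleling Proposition~\ref{pp:propsg}. Part (\ref{it:fualg}) is the $\bfk$-linearization of part (\ref{it:fomonoid}): ${\bfk}\ldf(\tx)$ is the free module on the free operated monoid, and extending $\mul$, $B^+$, and the map $\bar f$ linearly yields the free operated algebra.

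For part (\ref{it:fubialg}) I would first note that $({\bfk}\ldf(\tx),\mul,\ul,\col,\epl,B^+)$ is a cocycle bialgebra: it is a bialgebra by Corollary~\ref{co:hopfx}, $B^+$ preserves $\hlf(\tx)$, and the cocycle condition~(\ref{eq:cocycle}) is precisely~(\ref{eq:eqiterated}). For freeness, fix a cocycle bialgebra $(H,m,u,\Delta,\varepsilon,P)$ and a map $f\colon X\to H$ with primitive image. Part (\ref{it:fualg}) already produces the unique operated algebra morphism $\bar f\colon{\bfk}\ldf(\tx)\to H$ with $\bar f j_X=f$; the entire task is to show that this same $\bar f$ automatically respects $\col$ and $\epl$, for then it is a morphism of operated bialgebras and its uniqueness is inherited from part (\ref{it:fualg}).

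The compatibility $\Delta\bar f=(\bar f\ot\bar f)\col$ I would prove by induction along the recursion (outer induction on $n$, inner induction on breadth), and this interlocking of the two cocycle conditions is the main obstacle. On the generators $\bullet_x$ with $x\in X$, primitivity of $f(x)$ gives the identity; on $\bullet_\sigmaup=B^+(1)$, the cocycle condition in $H$ forces $P(1)$ to be primitive, matching $\col(\bullet_\sigmaup)$. The decisive operator step reads, for $F$ satisfying the inductive hypothesis,
\begin{align*}
\Delta\bar f(B^+(F)) &= \Delta P(\bar f(F)) = P(\bar f(F))\ot 1 + (\id\ot P)\Delta\bar f(F)\\
&= \bar f(B^+(F))\ot 1 + (\bar f\ot\bar f B^+)\col(F),
\end{align*}
using the inductive hypothesis and $P\bar f=\bar f B^+$; applying~(\ref{eq:eqiterated}) for $B^+$ inside $\hlf(\tx)$ shows this equals $(\bar f\ot\bar f)\col(B^+(F))$. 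The breadth step is routine, since $\Delta$, $\col$, and $\bar f\ot\bar f$ are all algebra morphisms. Compatibility with the counit reduces to the two identities $\varepsilon(f(x))=0$ and $\varepsilon P=0$, each obtained by applying $(\varepsilon\ot\id)\Delta=\id$ to primitivity, respectively to the cocycle condition; together with multiplicativity of $\varepsilon$ and $\epl$ these give $\varepsilon\bar f=\epl$.

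Part (\ref{it:fuhopf}) is then formal. By Corollary~\ref{co:hopfx} the bialgebra ${\bfk}\ldf(\tx)$ is connected graded, hence a Hopf algebra, so it is itself a cocycle Hopf algebra. Every cocycle Hopf algebra is in particular a cocycle bialgebra, so the universal property of part (\ref{it:fubialg}), which quantifies over all cocycle bialgebras, restricts verbatim to the subclass of cocycle Hopf algebras; since the representing object already lies in that subclass, it represents the restricted functor as well. By uniqueness of free objects up to isomorphism, the free cocycle Hopf algebra on $X$ therefore coincides with the free cocycle bialgebra, namely ${\bfk}\ldf(\tx)$ --- no antipode condition intervenes, as any bialgebra morphism between Hopf algebras preserves antipodes automatically.
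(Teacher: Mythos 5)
Your proposal is correct and takes essentially the same route as the paper: the same layer-by-layer recursion on $M_n$ for parts (a)--(b), the same depth-then-breadth double induction matching the cocycle condition in $H$ against Eq.~(\mref{eq:eqiterated}) for part (c), and the same appeal to automatic antipode-compatibility of bialgebra morphisms for part (d). The only departures are small additions the paper leaves implicit --- your explicit counit check via $\varepsilon P=0$ and your observation that the cocycle condition forces $P(1)$ to be primitive (covering $\bullet_\sigmaup$ in the base case) --- and both are correct.
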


\begin{proof}
(\mref{it:fomonoid})
We prove the result by verifying the universal property. For this, let $(S,  Q)$ be a given operated monoid  and $f:X\rightarrow S$ a given set map.
We recursively define operated monoid homomorphisms $\lbar{f}_n: M_n \to S$, $n\geq 0$, as follows.
First extend $f$ to $\tx$ by defining $f(\bullet_\sigmaup)=Q(1)$. By the universal property of the free monoid $M(\bullet_{\tx})$, this extended set map $f: \tx\to S$ further extends to a unique
homomorphism $\lbar{f}_0: M(\bullet_{\tx}) \to S$ of monoids.
Assume that $\lbar{f}_n: M_{n}\to S$ has been defined and define the set map
\[
\lbar{f}_{n+1}: \bullet_X\sqcup \ppp{M_{n}} \to S, \quad \,\bullet_x\mapsto f(x), \quad \ppp{\lbar{F}} \mapsto  Q(\lbar{f}_{n}(\lbar{F}))\,\text{ for } \ x\in \tx,\,\lbar{F}\in M_{n}.
\]
Again by the universal property of the free monoid
$M(\bullet_X\sqcup \ppp{M_{n}})$,
we can extend $\lbar{f}_{n+1}$ to a unique monoid homomorphism
\[
\lbar{f}_{n+1}: M_{n+1} = M(\bullet_X\sqcup \ppp{M_{n}}) \to S.
\]
Finally we define
$$\lbar{f}:= \dirlim \lbar{f}_n: \ldf(\tx)  \to S.$$
By construction, $\lbar{f}$ is an operated monoid homomorphism, and the unique one such that $\lbar{f}(\bullet_x) = f(x)$ for $x\in X$.
\smallskip

\noindent
(\mref{it:fualg}) It follows from Item~(\mref{it:fomonoid}).
\smallskip

\noindent
(\mref{it:fubialg}) Let $(H, m, u, \Delta, \vep, P)$ be a cocycle bialgebra and let $f:X\to H$ be a map. Then $(H,m,P)$ is an operated algebra. Thus by Item~(\mref{it:fualg}), there is a unique operated algebra homomorphism $\free{f}:\bfk \ldf(\tx)\to H$ such that $\free{f} j_X=f$. It remains to check the compatibility of the coproducts $\Delta$ and $\col$ for which we verify
\begin{equation}
\Delta \free{f} (w)=(\free{f}\ot \free{f}) \col (w)\quad \text{for all } w\in \calf_\ell(\tx),
\mlabel{eq:copcomp}
\end{equation}
by induction on the depth $\dep(w)$ of $w$. If $\dep(w)=0$, then, Eq.~(\mref{eq:copcomp}) holds for $w\in X$ and $w=1$ since both $f(X)\subseteq H$ and $j_X(X)\subseteq \bfk\ldf(\tx)$ are primitive. Assume that Eq.~(\mref{eq:copcomp}) holds for $\dep(w)\leq n$ and consider the case of $\dep(w)=n+1$. For this case we apply the induction on the breadth $\bre(F)$.
If $\bre(F)=1$, since $\dep(F)=n+1\geq1$, we have $w=B^+(\lbar{w})$ for some
$\lbar{w}\in\ldf(\tx)$.  Then
\begin{align*}
\Delta \free{f}(w)&=\Delta \free{f} (B^+(\lbar{w}))=\Delta P(\free{f} (\lbar{w}))\\
&=P(\free{f}(\lbar{w}))\ot 1+ (\id\ot P)\Delta(\free{f} (\lbar{w}))\quad(\text{by Eq.~(\mref{eq:cocycle})}) \\
&=P(\free{f}(\lbar{w}))\ot 1+ (\id\ot P)(\free{f}\ot \free{f}) \col (\lbar{w})\quad(\text{by the induction hypothesis on~}\dep(w)) \\
&=P(\free{f}(\lbar{w}))\ot 1+ (\free{f}\ot P\free{f}) \col (\lbar{w})\\
&=\free{f}(B^+(\lbar{w}))\ot 1+ (\free{f}\ot \free{f}B^+) \col (\lbar{w})\\
&=(\free{f}\ot \free{f})(B^+(\lbar{w})\ot 1+(\id\ot B^+)\col (\lbar{w})) \\
&=(\free{f}\ot \free{f}) \col (B^+(\lbar{w}))=(\free{f}\ot \free{f}) \col (w).
\end{align*}
Assume that Eq.~(\mref{eq:copcomp}) holds for $\dep(w)=n+1$ and $\bre(w)\leq m$, in addition to $\dep(w)\leq n$ by the first induction hypothesis, and consider the case when $\dep(w)=n+1$ and $\bre(w)=m+1$. Then $w=w_{1}w_{2}$ for some $w_{1},w_{2}\in\ldf(\tx)$ with $\bre(w_{1})+\bre(w_{2})=m+1$. Write
$$\col(w_{1})=\sum_{(w_{1})}w_{1(1)}\ot w_{1(2)}\,\text{ and }\, \col(w_{2})=\sum_{(w_{2})}w_{2(1)}\ot w_{2(2)}.$$
By the induction hypothesis on the breadth, we have
\begin{align*}
\Delta\free{f} (w_{1})=& (\free{f} \ot \free{f})\col(w_1) =¡¡\sum_{(w_{1})}\free{f} (w_{1(1)})\ot \free{f}(w_{1(2)}),\\
\Delta\free{f} (w_{2})=& (\free{f} \ot \free{f})\col(w_2) =\sum_{(w_{2})}\free{f} (w_{2(1)})\ot \free{f}(w_{2(2)}).
\end{align*}
Thus
\begin{align*}
\Delta \free{f}(w) &=\Delta \free{f} (w_{1}w_{2})=\Delta (\free{f}(w_{1})\free{f}(w_{2}))
=\Delta(\free{f}(w_{1}))\Delta (\free{f}(w_{2}))\\
&=\left(\sum_{(w_{1})}\free{f} (w_{1(1)})\ot \free{f}(w_{1(2)})\right)\left(\sum_{(w_{2})}\free{f} (w_{2(1)})\ot \free{f}(w_{2(2)})\right)\\
&=(\free{f}\ot \free{f})\left( \left(\sum_{(w_{1})} w_{1(1)} \ot w_{1(2)} \right)
 \left(\sum_{(w_{2})} w_{2(1)} \ot w_{2(2)} \right)  \right) \\
&=(\free{f}\ot \free{f}) (\col (w_1)\col (w_1)) =  (\free{f}\ot \free{f}) \col (w).
\end{align*}
This completes the induction on the depth and hence the induction on the breadth.
\smallskip

\noindent
(\mref{it:fuhopf})
Let $(H, m, u, \Delta, \vep, P)$ be a cocycle Hopf algebra, where the antipode is suppressed, and let $f:X\to H$ be a map. By Item~(\mref{it:fubialg}), there is a unique morphism $\free{f}:\bfk\ldf(\tx)\to H$ of cocycle bialgebras. But any bialgebra morphism between two Hopf algebras is compatible with the antipodes~\cite[Chapter 4]{Sw}. Thus $\free{f}$ is a Hopf algebra morphism. This proves the desired universal property.
\end{proof}

\section{Operated bialgebras on free Rota-Baxter algebras}
\label{sec:RBHOPHAL}
In this section, we first construct free Rota-Baxter algebras in terms of decorated forests. We then obtain a cocycle bialgebra structure on them.

\subsection{Free Rota-Baxter algebras on decorated forests}
We first recall some results on Rota-Baxter algebra~\cite{Ba,Gub,Ro}.

\begin{defn}
Let $\lambda$ be a given element of {\bfk}. A {\bf Rota-Baxter algebra of weight $\lambda$} is a pair $(R, P)$ consisting of an algebra $R$ and a linear operator $P:R\rightarrow R$ that satisfies the {\bf Rota-Baxter equation}
\begin{equation}
P(u)P(v)=P(uP(v))+P(P(u)v)+\lambda P(uv)\,\text { for all}~ u, v\in R.
\mlabel{eq:rbo}
\end{equation}
\mlabel{de:rba}
\end{defn}
Basic concepts for algebras, such as ideals and homomorphisms, can be similarly defined for Rota-Baxter algebras.

We recall the construction~\cite{Gub} of a free operated monoid and operated unitary algebra in terms of bracketed words on the set $X$. For any set $Y$, let $\lc Y\rc$ denote a replica of $Y$ that is disjoint from $Y$. We begin with defining the free monoids
$$
\mathfrak{M}_{0}:= M(X), \quad \mathfrak{M}_{1}:=M(X \sqcup \lfloor \mathfrak{M}_{0} \rfloor ) = M (X \sqcup \lfloor M(X)\rfloor ),
$$
with the natural injection
$$
i_{0,1}: \mathfrak{M}_{0} = M(X) \hookrightarrow \mathfrak{M}_{1} = M(X \sqcup \lfloor \mathfrak{M}_{0} \rfloor).
$$
Inductively assuming that $\mathfrak{M}_{n-1}$ and $
i_{n-2,n-1}: \mathfrak{M}_{n-2} \hookrightarrow \mathfrak{M}_{n-1}$ have been obtained for $n \geq 2$, we define
$$
\mathfrak{M}_{n}:= M (X \sqcup \lfloor \mathfrak{M}_{n-1}\rfloor)
$$
and have the injection
$$
\lfloor \mathfrak{M}_{n-2} \rfloor  \hookrightarrow \lfloor \mathfrak{M}_{n-1} \rfloor,
$$
which induces the monomorphism
$$
i_{n-1, n}: \mathfrak{M}_{n-1} = M(X \sqcup \lfloor \mathfrak{M}_{n-2} \rfloor) \hookrightarrow M(X \sqcup \lfloor \mathfrak{M}_{n-1} \rfloor) = \mathfrak{M}_{n}
$$
of free monoids. Finally, define $\mathfrak{M}(X) := \dirlim \mathfrak{M}_{n}$.
Elements in $\mathfrak{M}(X)$ are called {\bf bracketed words} on $X$.
We can define an operator $\pw: \mathfrak{M}(X) \to \mathfrak{M}(X)$, which takes
$w\in \mathfrak{M}(X)$ to $\lc w\rc$, and extend it by linearity to a linear operator on $\bfk \mathfrak{M}(X)$, still denoted by $\pw$.

\begin{prop}\cite{Gub} Let $j_{X}:X\hookrightarrow\mathfrak{M}(X) $ be the natural embedding
and $\cdot$ the concatenation product.
Then
\begin{enumerate}
\item
The triple $(\mathfrak{M}(X), \,\cdot,\, \pw)$ together with $j_X$ is the free operated monoid on $X$.  \mlabel{it:fomonoidw}

\item
The triple $(\bfk \mathfrak{M}(X), \,\cdot,\, \pw)$ together with $j_X$ is the free operated unitary algebra on $X$. \mlabel{it:fualgw}
\end{enumerate}
\mlabel{prop:propmw}
\end{prop}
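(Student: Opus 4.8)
The plan is to prove each item by verifying the relevant universal property directly, following the same direct-limit strategy used in the proof of Theorem~\mref{thm:propm}(\mref{it:fomonoid}), with the grafting $B^+$ replaced by the bracketing operator $\pw$ and the recursive layers $M_n$ replaced by the $\mathfrak{M}_n$. For item~(\mref{it:fomonoidw}), let $(S, Q)$ be an operated monoid and $f\colon X \to S$ a set map; the goal is a unique operated monoid homomorphism $\free{f}\colon \mathfrak{M}(X) \to S$ with $\free{f}\, j_X = f$. First I would build compatible monoid homomorphisms $\free{f}_n\colon \mathfrak{M}_n \to S$ by induction on $n$. For the base case $\mathfrak{M}_0 = M(X)$, the universal property of the free monoid extends $f$ uniquely to $\free{f}_0\colon M(X) \to S$. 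Assuming $\free{f}_{n-1}$ has been defined, I would specify a set map on the generators $X \sqcup \lc \mathfrak{M}_{n-1} \rc$ of $\mathfrak{M}_n = M(X \sqcup \lc \mathfrak{M}_{n-1} \rc)$ by $x \mapsto f(x)$ for $x \in X$ and $\lc w \rc \mapsto Q(\free{f}_{n-1}(w))$ for $w \in \mathfrak{M}_{n-1}$, and extend it uniquely to a monoid homomorphism $\free{f}_n\colon \mathfrak{M}_n \to S$ by the universal property of the free monoid.

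The crux is to check that these homomorphisms are compatible with the connecting monomorphisms $i_{n-1,n}$, that is, $\free{f}_n\, i_{n-1,n} = \free{f}_{n-1}$; only then does $\free{f} := \dirlim \free{f}_n$ make sense on $\mathfrak{M}(X) = \dirlim \mathfrak{M}_n$. Since both sides are monoid homomorphisms on $\mathfrak{M}_{n-1}$, it suffices to compare them on generators: on $x \in X$ both give $f(x)$, while on $\lc w \rc$ with $w \in \mathfrak{M}_{n-2}$ one has $\free{f}_{n-1}(\lc w \rc) = Q(\free{f}_{n-2}(w))$ and $\free{f}_n(i_{n-1,n}(\lc w \rc)) = Q(\free{f}_{n-1}(w))$, which agree precisely because of the compatibility one level down. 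Thus the compatibility must be carried along as part of the induction hypothesis; this nested induction is the only genuinely delicate point, even though each individual step is routine. With $\free{f}$ in hand, $\free{f}\, \pw = Q\, \free{f}$ holds by construction (it is the defining rule $\lc w \rc \mapsto Q(\free{f}(w))$ read in the limit), so $\free{f}$ is an operated monoid homomorphism with $\free{f}\, j_X = f$. Uniqueness follows because any operated monoid homomorphism $g$ extending $f$ must satisfy $g(\lc w \rc) = Q(g(w))$ and agree with $f$ on $X$; an induction on $n$ then gives $g|_{\mathfrak{M}_n} = \free{f}_n$, and since every bracketed word lies in some $\mathfrak{M}_n$, we conclude $g = \free{f}$.

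For item~(\mref{it:fualgw}), I would deduce the free operated unitary algebra property from item~(\mref{it:fomonoidw}) exactly as Theorem~\mref{thm:propm}(\mref{it:fualg}) is deduced from Theorem~\mref{thm:propm}(\mref{it:fomonoid}): given an operated unitary algebra $(R, P)$ and a set map $f\colon X \to R$, the underlying multiplicative monoid of $R$ together with $P$ is an operated monoid, so item~(\mref{it:fomonoidw}) yields a unique operated monoid homomorphism $\mathfrak{M}(X) \to R$. This map extends $\bfk$-linearly to the free module $\bfk \mathfrak{M}(X)$, and the linearity of $\pw$ and of $P$ makes the extension an operated algebra homomorphism; it is the unique such map since its values on the basis $\mathfrak{M}(X)$ are already forced. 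Overall I expect no serious obstacle, since the argument is a standard direct-limit verification built from the universal property of free monoids; the one place demanding genuine care is ensuring the layer-wise maps glue, namely weaving the compatibility $\free{f}_n\, i_{n-1,n} = \free{f}_{n-1}$ into the induction rather than attempting to verify it after the fact.
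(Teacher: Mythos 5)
Your proof is correct and takes essentially the same route as the paper: this proposition is recalled from \cite{Gub}, and the paper's own proof of the parallel result, Theorem~\mref{thm:propm}~(\mref{it:fomonoid})--(\mref{it:fualg}), is precisely your direct-limit argument with the layers $M_n$ in place of $\mathfrak{M}_n$ and $B^+$ in place of $\pw$. Your explicit insistence on carrying the compatibility $\free{f}_n\, i_{n-1,n} = \free{f}_{n-1}$ inside the induction is a point the paper leaves implicit, but it is the same construction, not a different approach.
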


Then the uniqueness of the free objects in the category of operated algebras gives the isomorphisms
\begin{equation}
\begin{aligned}
\theta: (\ldf(\tx), \,\cdot,\, \pl)\cong(\mathfrak{M}(X), \,\cdot,\, \pw)\,\text{ and }\, \theta:({\bfk}\ldf(\tx), \,\cdot,\,\pl) \cong({\bfk}\mathfrak{M}(X), \,\cdot,\, \pw),
\end{aligned}
\mlabel{eq:Isomorphism}
\end{equation}
sending $x\in X$ to $\theta(x) = \bullet_x$. Here the first isomorphism is between operated monoids and the second one is between operated algebras.

We next recall the construction~\cite{EG,Gub} of a canonical $\bfk$-basis of the free unitary Rota-Baxter algebra by bracketed words on the set $X$.
Let $Y,Z$ be subsets of $\mathfrak{M}(X)$. Define first the {\bf alternating products} of $Y$ and $Z$ by
\begin{align}
\Lambda(Y,Z):&=
\left(\bigcup_{r\geq1}(Y\lfloor Z\rfloor)^{r}\right)\bigcup
\left(\bigcup_{r\geq0}(Y\lfloor Z\rfloor)^rY\right)\,\bigcup
\left(\bigcup_{r\geq1}(\lfloor Z\rfloor Y)^{r}\right)\bigcup
\left(\bigcup_{r\geq0}(\lfloor Z\rfloor Y)^r\lfloor Z\rfloor\right)\bigcup
\left\{1\right\},\nonumber
\mlabel{eq:alternating}
\end{align}
where $1$ is the identity in $\mathfrak{M}(X)$. Obviously, $\Lambda(Y,Z)\subseteq\mathfrak{M}(X)$.
Then define recursively $$\frak X_{0}:=S(X)\sqcup\{1\}\,\text{ and }\,\frak X_{n}:=\Lambda(S(X),\frak X_{n-1}),\, n\geq1.$$
Thus $\frak X_{0}\subseteq\cdots\subseteq\frak X_{n}\subseteq\cdots.$
Finally define $$\frak X_{\infty}:=\dirlim\frak X_{n} =\cup_{n\geq 0} \frak X_{n}.$$
Elements in $\frak X_{\infty}$
are called {\bf Rota-Baxter bracketed words} (RBWs).
For an RBW $w\in\frak X_{\infty}$, we call $\dep(w):=\min\{n\mid w\in \frak X_{n}\}$ the
{\bf depth} of $w$.

\begin{lemma}\cite{EG,Gub}
Every RBW $w\neq1$ has a unique {\bf alternating decomposition}: $w=w_{1}\cdots w_{m},$
where $w_i\in X\cup \lc \frakX_\infty\rc$, $1\leq i\leq m$, $m\geq 1$ and no consecutive elements in the sequence $w_1, \cdots, w_m$ are in $\lc\frak X_\infty\rc$. In other words, for each $1\leq i\leq m-1$, either $w_i$ or $w_{i+1}$ is in $X$.
\mlabel{lem:standcom}
\end{lemma}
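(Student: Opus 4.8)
The plan is to establish existence and uniqueness separately, reading both off the recursive construction of $\frakX_\infty$ and the free-monoid structure of $\mathfrak{M}(X)$ recorded in Proposition~\mref{prop:propmw}.

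For existence, I would argue directly from the definition $\frakX_n=\Lambda(S(X),\frakX_{n-1})$. Take $w\neq 1$ and let $n=\dep(w)$, so that $w\in\frakX_n$. If $n=0$, then $w\in S(X)$ is a nonempty word in $X$, and writing $w=x_1\cdots x_m$ as a product of letters gives an alternating decomposition with every factor in $X$. If $n\geq 1$, then by the definition of $\Lambda(S(X),\frakX_{n-1})$ the word $w$ belongs to one of the four listed families of alternating products in the $S(X)$-blocks and the brackets $\lc\frakX_{n-1}\rc$; in each family the bracketed factors are separated by nonempty $S(X)$-blocks. Breaking each such $S(X)$-block into its constituent letters of $X$ and leaving the bracketed factors $\lc z\rc\in\lc\frakX_{n-1}\rc\subseteq\lc\frakX_\infty\rc$ untouched yields a product $w=w_1\cdots w_m$ with each $w_i\in X\cup\lc\frakX_\infty\rc$. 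Since every pair of adjacent brackets in the original $\Lambda$-form is separated by at least one letter, no two consecutive $w_i$ lie in $\lc\frakX_\infty\rc$, which is exactly the required alternating decomposition.

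For uniqueness, the key observation is that $\frakX_\infty\subseteq\mathfrak{M}(X)$ and that, by the inductive construction $\mathfrak{M}_n=M(X\sqcup\lc\mathfrak{M}_{n-1}\rc)$ underlying Proposition~\mref{prop:propmw}, the monoid $\mathfrak{M}(X)$ is the free monoid on the generating set $X\sqcup\lc\mathfrak{M}(X)\rc$. In a free monoid the generators are precisely the indecomposable elements, and every element factors uniquely as a product of generators. Each factor $w_i$ of an alternating decomposition lies in $X\cup\lc\frakX_\infty\rc\subseteq X\sqcup\lc\mathfrak{M}(X)\rc$, hence is such a generator; consequently any two alternating decompositions of $w$ are two factorizations of the same element into generators of $\mathfrak{M}(X)$ and must coincide. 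I note that the alternating (no two consecutive brackets) condition plays no role in the uniqueness argument itself — it is automatic once the factors are recognized as free-monoid generators.

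The main obstacle I expect is the bookkeeping in the existence step: making the passage from the compact $\Lambda$-notation to an explicit factor-by-factor product precise, in particular checking across all four families of $\Lambda(S(X),\frakX_{n-1})$ that bracketed factors never become adjacent after the $S(X)$-blocks are split into letters. The uniqueness step, by contrast, is essentially immediate once the free-monoid structure of $\mathfrak{M}(X)$ from Proposition~\mref{prop:propmw} is invoked, since unique factorization into generators does all the work.
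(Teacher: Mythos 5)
Your proof is correct. Note that the paper itself does not prove this lemma at all---it is quoted with a citation to the references [EG, Gub]---so there is no in-paper argument to compare against; what you have written is essentially the standard argument underlying those references. Both halves are sound: existence does follow by splitting the $S(X)$-blocks in the four families of $\Lambda(S(X),\frakX_{n-1})$ into letters (and the $n=0$ case separately), and uniqueness follows because an increasing union of free monoids on increasing generating sets is the free monoid on the union, so $\mathfrak{M}(X)=M\big(X\sqcup\lc\mathfrak{M}(X)\rc\big)$ and every factor in $X\cup\lc\frakX_\infty\rc$ is a single generator, making any such decomposition the unique letter-by-letter factorization. Your observation that the alternating condition is irrelevant to uniqueness (it only matters for existence, i.e., for which factorizations arise from elements of $\frakX_\infty$) is also accurate, and in fact gives the slightly stronger statement that the decomposition into factors from $X\cup\lc\frakX_\infty\rc$ is unique even without imposing alternation.
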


Let $\ncshaw(X):=\bfk\frak X_{\infty}$. Since
$\pw(w) = \lc w\rc \in \ncshaw(X)\,\text{ for any }\, w\in \ncshaw(X),$
the linear operator $\pw: \bfk\frakM(X) \to \bfk\frakM(X)$
restricts to a linear operator
$$\pw: \ncshaw(X) \to \ncshaw(X).$$
For $w,w'\in\frak X_{\infty}$, we define the product $w\dw  w'$ inductively on the sum of depths $n:=\dep(w)+\dep(w')\geq 0$.
If $n=0$, then $w,w'\in\frak X_{0}=M(X)$ and define $w\dw  w':=xx'$, the concatenation in $M(X)$.
Suppose that $w\dw  w'$ have been defined for $n\leq k, k\geq 0,$ and consider
the case of $n=k+1$.
First assume that $\bre(w),\bre(w')\leq 1$. Then define
\begin{equation}
\begin{aligned}
w\dw  w'=\left\{\begin{array}{ll}
\lc w\dw  \lbar{w}'\rc+\lc \lbar{w}\dw  w'\rc+\lambda\lc\lbar{w}\dw \lbar{w}'
\rc,&\text{ if }w=\lc\lbar{w}\rc\,\text{ and }\, w'=\lc\lbar{w}'\rc,\\
ww',&\text{otherwise}.
\end{array}\right.
\end{aligned}
\mlabel{eq:Bdia}
\end{equation}
Here the product in the first case is defined by the induction hypothesis, and in the second case is
defined by concatenation.
Now assume that $\bre(w)\geq1$ or $\bre(w')\geq1$. Let $w=w_{1}\cdots w_{m}$ and $w'=w'_{1}\cdots w'_{m'}$ be the alternating decompositions of $w$ and $w'$, respectively. Define
\begin{equation}
w\dw  w':=w_{1}\cdots w_{m-1} (w_{m}\dw  w'_{1})w'_{2}\cdots w'_{m'},
\mlabel{eq:cdiam}
\end{equation}
where $w_{m}\dw  w'_{1}$ is defined by Eq.~(\mref{eq:Bdia}) and the rest products are
given by the concatenation.

\begin{lemma}
If $w_1, w_2$ are RBWs such that $w_1w_2\in \frakM(X)$ is also an RBW, then
$$
w_1\dw w_2 =w_1w_2.
$$
\mlabel{lem:wprod}
\end{lemma}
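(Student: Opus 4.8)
The plan is to argue directly from the recursive definition of $\dw$ in Eqs.~(\mref{eq:Bdia}) and~(\mref{eq:cdiam}), showing that the \emph{only} mechanism by which $\dw$ can fail to coincide with concatenation is the first (three-term) branch of Eq.~(\mref{eq:Bdia}), and that the hypothesis ``$w_1w_2$ is an RBW'' forbids this branch from firing at the junction of $w_1$ and $w_2$. First I would dispose of the trivial cases $w_1=1$ or $w_2=1$, where $w_1\dw w_2=w_1w_2$ is immediate since $1$ is the unit of $\dw$. So assume both are nonempty and write their alternating decompositions $w_1=a_1\cdots a_m$ and $w_2=b_1\cdots b_{m'}$ furnished by Lemma~\mref{lem:standcom}, with each $a_i,b_j\in X\cup\lc\frakX_\infty\rc$.

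The key structural observation is that the last letter $a_m$ of $w_1$ and the first letter $b_1$ of $w_2$ cannot both lie in $\lc\frakX_\infty\rc$. Indeed, $\frakM(X)$ is the free monoid on $X\sqcup\lc\frakM(X)\rc$, so the factorization of $w_1w_2$ into letters is unique; hence the letter sequence $a_1,\dots,a_m,b_1,\dots,b_{m'}$ is precisely the unique alternating decomposition of the RBW $w_1w_2$. By the defining property of an alternating decomposition (Lemma~\mref{lem:standcom}), no two consecutive letters are both in $\lc\frakX_\infty\rc$; applied to the adjacent pair $(a_m,b_1)$ this gives the claim, since otherwise $\cdots a_m b_1\cdots$ would exhibit two consecutive bracketed letters, contradicting $w_1w_2\in\frakX_\infty$.

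With this in hand the computation is immediate. When at least one of $w_1,w_2$ has breadth $\geq 2$, Eq.~(\mref{eq:cdiam}) gives $w_1\dw w_2=a_1\cdots a_{m-1}(a_m\dw b_1)b_2\cdots b_{m'}$, in which every factor except the junction $a_m\dw b_1$ is already a concatenation; when both have breadth $1$ one applies Eq.~(\mref{eq:Bdia}) directly with $w_1=a_1$ and $w_2=b_1$. In either case the junction product $a_m\dw b_1$ is computed by Eq.~(\mref{eq:Bdia}), and since $a_m,b_1$ are not both in $\lc\frakX_\infty\rc$ we land in the \emph{otherwise} branch, yielding $a_m\dw b_1=a_mb_1$ by concatenation. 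Substituting back collapses the whole expression to $a_1\cdots a_m b_1\cdots b_{m'}=w_1w_2$, as desired.

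I do not expect to need the induction on depth underlying the construction of $\dw$: the three-term recursion of Eq.~(\mref{eq:Bdia}) is never entered, so the argument terminates after a single unfolding. The only genuine point requiring care, and the step I would scrutinize most, is the uniqueness argument, where I must invoke both the freeness of $\frakM(X)$ (to identify the letter factorization of $w_1w_2$) and the uniqueness clause of Lemma~\mref{lem:standcom} (to conclude that the alternating condition survives across the junction). Everything else is routine bookkeeping within the two defining cases of $\dw$.
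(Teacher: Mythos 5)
Your proof is correct and follows essentially the same route as the paper's: take the alternating decompositions of $w_1$ and $w_2$, note that since $w_1w_2$ is an RBW its (unique) alternating decomposition forces the junction letters $a_m$ and $b_1$ not to lie both in $\lc\frakX_\infty\rc$, and conclude via the concatenation branches of Eqs.~(\mref{eq:Bdia}) and~(\mref{eq:cdiam}). The only difference is that you spell out the uniqueness step (freeness of $\frakM(X)$ plus Lemma~\mref{lem:standcom}) and the trivial unit cases, which the paper leaves implicit.
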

\begin{proof}
Let $w_1$ and $w_2$ be RBWs. Let $w_1=w_{11}\cdots w_{1m_1}$ and $w_2=w_{21}\cdots w_{2m_2}$ be the corresponding alternating decompositions. If $w_1w_2\in \frakM(X)$ is still an RBW, then its alternating decomposition shows that $w_{1m_1}$ and $w_{21}$ cannot be both in $\lc \frakX_\infty\rc$. Then the conclusion follows from Eqs.~(\mref{eq:Bdia}) and (\mref{eq:cdiam}).
\end{proof}

The following result gives the construction of free Rota-Baxter algebras.
\begin{prop}\cite{EG,Gub}
Let $j_X:X\hookrightarrow \ncshaw(X)$ be the natural embedding.
Then the triple $(\ncshaw(X), \dw , \pw)$ together with $j_X$ is the free unitary Rota-Baxter algebra of weight $\lambda$ on $X$.
\mlabel{pp:RBAbr}
\end{prop}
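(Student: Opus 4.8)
The plan is to verify two things: that $(\ncshaw(X), \dw, \pw)$ is a unitary Rota-Baxter algebra of weight $\lambda$, and that it satisfies the universal property of the free such algebra. For the first, the element $1$ is the unit, so the content is the associativity of $\dw$ and the Rota-Baxter identity (\mref{eq:rbo}) for $\pw$.

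I would prove associativity $(w\dw w')\dw w'' = w\dw(w'\dw w'')$ by induction on the total depth $n=\dep(w)+\dep(w')+\dep(w'')$, with an inner induction on the total breadth. The base case $n=0$ is the associativity of concatenation in $M(X)$. For the inductive step, the defining formula (\mref{eq:cdiam}) together with Lemma~\mref{lem:wprod} reduces the problem to factors of breadth $\le 1$, where the only nontrivial configuration is $w=\lc\bar w\rc$, $w'=\lc\bar w'\rc$, $w''=\lc\bar w''\rc$. Expanding both sides by (\mref{eq:Bdia}) produces bracketed products of strictly smaller total depth, to which the induction hypothesis applies, giving equality. The Rota-Baxter identity is then immediate from (\mref{eq:Bdia}): for $w,w'\in\frakX_\infty$,
$$\pw(w)\dw\pw(w') = \lc w\rc\dw\lc w'\rc = \lc w\dw\lc w'\rc\rc + \lc\lc w\rc\dw w'\rc + \lambda\lc w\dw w'\rc,$$
which is exactly $\pw(w\dw\pw(w')) + \pw(\pw(w)\dw w') + \lambda\,\pw(w\dw w')$.

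For the universal property, let $(R,P)$ be a Rota-Baxter algebra of weight $\lambda$ and $f:X\to R$ a set map. Forgetting the Rota-Baxter structure, $(R,P)$ is an operated algebra, so by Proposition~\mref{prop:propmw} there is a unique operated algebra homomorphism $\Phi:\bfk\frakM(X)\to R$ extending $f$, characterized by $\Phi\pw=P\Phi$ and multiplicativity for concatenation. I would take $\bar f$ to be the restriction of $\Phi$ to $\ncshaw(X)=\bfk\frakX_\infty$; it then automatically commutes with the operators and restricts to $f$ on $X$. The substantive step is that $\bar f$ is multiplicative for $\dw$, namely $\bar f(w\dw w')=\bar f(w)\bar f(w')$ in $R$. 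I would establish this by an induction on total depth and breadth that mirrors the definition of $\dw$: the concatenation case is handled by multiplicativity of $\Phi$, while the key case $w=\lc\bar w\rc$, $w'=\lc\bar w'\rc$ is handled precisely by the Rota-Baxter axiom in $R$ applied to $P(\bar f(\bar w))P(\bar f(\bar w'))$. Uniqueness follows because the relations $\bar f\pw=P\bar f$ and multiplicativity for $\dw$, combined with the alternating decomposition of Lemma~\mref{lem:standcom} and the identity of Lemma~\mref{lem:wprod}, recursively determine $\bar f$ on every RBW.

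The main obstacle I anticipate is the bookkeeping in the two parallel double inductions, for associativity of $\dw$ and for multiplicativity of $\bar f$. In each, after applying (\mref{eq:cdiam}) one must confirm that the inner product $w_m\dw w'_1$ falls into the regime of (\mref{eq:Bdia}) and strictly lowers the induction parameter, and that reassembling the outer factors by concatenation stays within $\frakX_\infty$; this is exactly where the condition in Lemma~\mref{lem:standcom} that no two consecutive factors lie in $\lc\frakX_\infty\rc$, and the resulting Lemma~\mref{lem:wprod}, must be invoked carefully.
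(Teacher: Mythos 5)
The paper itself gives no proof of this proposition --- it is quoted from \cite{EG,Gub} --- so the relevant comparison is with the standard argument in those references, and your proposal is a correct reconstruction of essentially that argument: associativity of $\dw$ by the double induction on total depth and breadth (with the all-bracket case as the only nontrivial configuration, settled by expanding via Eq.~(\mref{eq:Bdia}) and re-associating the resulting lower-depth triples by the induction hypothesis), the Rota-Baxter identity read off directly from Eq.~(\mref{eq:Bdia}), and the universal property via an extension $\bar{f}$ that is multiplicative for $\dw$ precisely because the bracket--bracket case is the Rota-Baxter axiom (\mref{eq:rbo}) in $R$. The one genuine difference is how you produce $\bar{f}$: rather than defining it recursively on Rota-Baxter words as in \cite{Gub}, you take the operated-algebra morphism $\Phi:\bfk\frakM(X)\to R$ supplied by Proposition~\mref{prop:propmw} and restrict it to $\ncshaw(X)=\bfk\frakX_\infty$, which is legitimate since $\frakX_\infty$ is closed under $\pw$ and concatenation by letters. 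This packaging buys you well-definedness, unitality and compatibility with the operator for free, and correctly isolates the two substantive points: multiplicativity of $\bar{f}$ for $\dw$, and uniqueness, which your appeal to the alternating decomposition (Lemma~\mref{lem:standcom}) together with Lemma~\mref{lem:wprod} does determine recursively. Both routes still face the same bookkeeping-heavy double inductions, which you only sketch but whose inductive parameters do decrease as you claim; so the proposal is sound, with the restriction-of-$\Phi$ device being a mild but pleasant streamlining made possible by the paper's own setup.
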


Thus letting $I_{\rm BR}$ denote the ideal of $\bfk\frakM(X)$ generated by the Rota-Baxter relation in Eq.~(\mref{eq:rbo}), we have
\begin{equation}
\bfk\frakM(X)/I_{\rm RB} \cong \ncshaw(X)
\mlabel{eq:rboiso}
\end{equation}
of operated algebras.

Under the isomorphism $\theta:\bfk\mathfrak{M}(X) \overset{\sim}{\rightarrow} \bfk\ldf(\tx)$ in Eq.~(\mref{eq:Isomorphism}) of unitary operated algebras,
denote
\begin{equation}
\mathfrak{L}_{n}:=\theta(\mathfrak{X}_{n}),\,n\geq0 \,\text{ and }\,  \mathfrak{L}_{\infty}:= \dirlim \frakL_n = \dirlim \theta(\frakX_n) = \theta(\mathfrak{X}_{\infty}).
\mlabel{eq:lbasisn} \notag
\end{equation}
Then we have the module isomorphism
\begin{equation}
\theta: \ncshaw(X)\overset{\sim}{\rightarrow} \ncshal(X):= {\bfk}\mathfrak{L}_{\infty}.
\mlabel{eq:isolw}
\end{equation}
In analogous to Rota-Baxter words, elements in $\mathfrak{L}_{\infty}$ are called {\bf Rota-Baxter forests} (RBFs).
By transporting of structures through the linear bijection $\theta$, the free Rota-Baxter algebra structure on $\ncshaw(X)$ gives rise to a free Rota-Baxter algebra structure on $\ncshal(X)$. More precisely,  define a product
$$\dl :\ncshal(X) \ot \ncshal(X)\rightarrow\ncshal(X)$$ by setting
\begin{equation}
F\dl F':=\theta(\theta^{-1}(F)\dw \theta^{-1}(F'))~\text{ for } F, F'\in \ncshal(X).
\mlabel{eq:lproduct}
\end{equation}
Also define a linear operator on $\ncshal(X)$ by $\theta \pw \theta^{-1}$ which turns out to be just $B^+$ since
$$\theta: (\bfk\mathfrak{M}(X), \pw) \rightarrow (\bfk\ldf(\tx) , \pl)$$
is an operated algebra homomorphism.
Then from Proposition~\mref{pp:RBAbr}, we obtain
\begin{prop}
The linear map $\theta$ in Eq.~(\mref{eq:isolw}) is an isomorphism of Rota-Baxter algebras. Furthermore, let $j_X:X\hookrightarrow \ncshal(X)$, $x\mapsto \bullet_x, x\in X,$ be the embedding.
The triple $(\ncshal(X), \dl , \pl)$ together with $j_X$ is the free Rota-Baxter algebra of weight $\lambda$ on $X$.
\mlabel{prop:alge}
\end{prop}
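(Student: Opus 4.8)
The plan is to note that the entire Rota-Baxter structure on $\ncshal(X)$ has been set up by transport along the linear bijection $\theta$, so that $\theta$ is a Rota-Baxter isomorphism almost tautologically, and then to obtain freeness from the fact that a universal property is preserved under isomorphism.

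First I would check that $\theta$ is a homomorphism of algebras. This is immediate from the definition of $\dl$ in Eq.~(\mref{eq:lproduct}): writing $F=\theta(u)$ and $F'=\theta(v)$ for $u,v\in\ncshaw(X)$, one has $\theta(u)\dl\theta(v)=\theta(\theta^{-1}\theta(u)\dw\theta^{-1}\theta(v))=\theta(u\dw v)$. Next I would record that $\theta$ intertwines the two operators, that is $\theta\pw=\pl\theta$; but this is precisely the observation made just before the statement, namely that the operator $\theta\pw\theta^{-1}$ on $\ncshal(X)$ equals $B^+=\pl$ because $\theta$ is an operated-algebra homomorphism under the isomorphism of Eq.~(\mref{eq:Isomorphism}). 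Since $\theta$ is a linear bijection by Eq.~(\mref{eq:isolw}) and respects both the products and the operators, it is an isomorphism of Rota-Baxter algebras, proving the first assertion.

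For the freeness, I would argue that the universal property defining the free Rota-Baxter algebra transports across $\theta$. Given any Rota-Baxter algebra $(R,P)$ of weight $\lambda$ and any set map $g\colon X\to R$, Proposition~\mref{pp:RBAbr} supplies a unique Rota-Baxter homomorphism $\psi\colon\ncshaw(X)\to R$ with $\psi j_X=g$. I would then define $\free{g}:=\psi\theta^{-1}\colon\ncshal(X)\to R$. Since $\theta^{-1}$ and $\psi$ are Rota-Baxter homomorphisms, so is $\free{g}$; and since $\theta(x)=\bullet_x$ (Eq.~(\mref{eq:Isomorphism})) gives $\theta^{-1}(\bullet_x)=x$, we get $\free{g}(\bullet_x)=\psi(x)=g(x)$, i.e.\ $\free{g}j_X=g$. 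Uniqueness follows by pulling any candidate homomorphism $\ncshal(X)\to R$ back along the bijection $\theta$ to a homomorphism out of $\ncshaw(X)$ extending $g$, which must coincide with $\psi$ by the uniqueness in Proposition~\mref{pp:RBAbr}.

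I expect no genuine obstacle: the mathematical content sits entirely in the prior construction of $\dl$ and in the already-verified fact that the transported operator is $B^+$. The only point needing care is the bookkeeping that $\theta$ carries the structure map $j_X$ of $\ncshaw(X)$ to that of $\ncshal(X)$, i.e.\ that $\theta(x)=\bullet_x$; this normalization is exactly what is built into Eq.~(\mref{eq:Isomorphism}), and it is what lets the universal property pass from source to target.
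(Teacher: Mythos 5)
Your proposal is correct and follows essentially the same route as the paper, which treats the proposition as an immediate consequence of transporting the Rota-Baxter structure of $\ncshaw(X)$ along the bijection $\theta$ (noting that the transported operator is $\pl$) and then invoking the universal property in Proposition~\mref{pp:RBAbr}. The paper states this without writing out details; your proposal merely fills in the routine verifications (multiplicativity of $\theta$, the intertwining $\theta\pw=\pl\theta$, the compatibility $\theta(x)=\bullet_x$ with the structure maps $j_X$, and the transfer of the universal property), all of which are sound.
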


From Lemma~\mref{lem:wprod}, we also obtain that, for RBFs $F_1, F_2\in \frakL_\infty$ such that $F_1F_2$ is also an RBF, we have
\begin{equation}
F_1 \dl F_2 =F_1F_2.
\mlabel{eq:lcompostion}
\end{equation}
Further from the isomorphism in Eq.~(\mref{eq:rboiso}) and $\theta$, we obtain the morphism
\begin{equation}
\varphi:(\bfk \ldf(\tx) ,\,\cdot,\, \pl)\rightarrow (\ncshal(X),\, \dl ,\, \pl).
\mlabel{eq:alghom}
\end{equation}
of operated algebras, which also follows from the universal property of
the free operated algebra $(\bfk\ldf(\tx) ,\,\cdot,\, \pl)$ on $X$ by Theorem~\mref{thm:propm}.

Here is an elementary property of $\varphi$.

\begin{lemma}
Let $i:\bfk \frakL_\infty \to \bfk\ldf(\tx)$ be the natural inclusion.
Then $\varphi i = \id_{\ncshal(X)}$. Consequently, $i\varphi$ is idempotent.
\mlabel{lem:varid}
\end{lemma}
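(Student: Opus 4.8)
The plan is to verify the identity $\varphi i=\id_{\ncshal(X)}$ on the $\bfk$-basis $\frakL_\infty$ of $\ncshal(X)$; by linearity this suffices. Since every RBF is in particular a decorated forest, the inclusion $i$ identifies an RBF $F\in\frakL_\infty$ with its image $i(F)\in\ldf(\tx)$, so I must show that $\varphi(F)=F$ for every RBF $F$. The only inputs I will use about $\varphi$ (from Eq.~(\mref{eq:alghom}) and the universal property in Theorem~\mref{thm:propm}) are that it is an operated algebra homomorphism $(\bfk\ldf(\tx),\cdot,\pl)\to(\ncshal(X),\dl,\pl)$ fixing the generators, i.e. $\varphi(\bullet_x)=\bullet_x$ for $x\in X$, that it is multiplicative in the sense $\varphi(F\cdot G)=\varphi(F)\dl\varphi(G)$, and that it commutes with grafting, $\varphi B^+=B^+\varphi$. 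In particular $\varphi(1)=1$ and $\varphi(\bullet_\sigmaup)=\varphi(B^+(1))=B^+(\varphi(1))=B^+(1)=\bullet_\sigmaup$.

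The argument proceeds by induction on $\dep(F)$, with an inner induction on $\bre(F)$, in parallel with the proof of Theorem~\mref{thm:propm}(\mref{it:fubialg}). For the base case $\dep(F)=0$ one has $F=1$, giving $\varphi(1)=1$, or $F=\bullet_{x_1}\cdots\bullet_{x_k}$ with $x_i\in X$; in the latter case multiplicativity gives $\varphi(F)=\bullet_{x_1}\dl\cdots\dl\bullet_{x_k}$, and since each prefix $\bullet_{x_1}\cdots\bullet_{x_j}$ is again an RBF, Eq.~(\mref{eq:lcompostion}) applied iteratively collapses the $\dl$-products into concatenation, yielding $\varphi(F)=F$. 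For the inductive step, assume the claim for all RBFs of depth at most $n$ and let $\dep(F)=n+1$. If $\bre(F)=1$, then the alternating decomposition of $F$ forces $F=B^+(\lbar{F})$ for an RBF $\lbar{F}$ with $\dep(\lbar{F})\le n$, so $\varphi(F)=B^+(\varphi(\lbar{F}))=B^+(\lbar{F})=F$ by the outer induction hypothesis. If $\bre(F)=m+1\ge2$, write $F=w_1F'$ as a concatenation, where $w_1$ is the first factor of the alternating decomposition (of breadth $1$) and $F'$ is the remaining RBF (of breadth $m$); both are RBFs of depth at most $n+1$. By the breadth-$1$ case already treated $\varphi(w_1)=w_1$, and by the inner induction hypothesis $\varphi(F')=F'$, whence $\varphi(F)=\varphi(w_1)\dl\varphi(F')=w_1\dl F'$.

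The crux is then to recognise $w_1\dl F'=w_1F'=F$: because $F=w_1F'$ is an RBF whose alternating decomposition never places two bracketed factors adjacent, Lemma~\mref{lem:wprod} (equivalently Eq.~(\mref{eq:lcompostion})) guarantees that the Rota-Baxter product $\dl$ coincides with concatenation here. This closes both inductions and proves $\varphi(F)=F$ for all RBFs, hence $\varphi i=\id_{\ncshal(X)}$. The stated consequence is then immediate, $(i\varphi)(i\varphi)=i(\varphi i)\varphi=i\,\id\,\varphi=i\varphi$, so $i\varphi$ is idempotent. I expect the only delicate point to be the bookkeeping that every factor and prefix arising from the alternating decomposition is itself an RBF of strictly smaller depth or breadth, so that the induction hypotheses genuinely apply and Eq.~(\mref{eq:lcompostion}) can legitimately be invoked to turn $\dl$ back into concatenation.
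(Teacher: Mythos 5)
Your proof is correct, but it is not the argument the paper gives; the two are genuinely different in mechanism. The paper's proof is a two-line structural one: since the Rota--Baxter words $\frakX_\infty$ form a $\bfk$-basis of the free Rota--Baxter algebra realized as the quotient $\bfk\frakM(X)/I_{\rm RB}$ (Eq.~(\mref{eq:rboiso})), one gets the module decomposition $\bfk\frakM(X)=I_{\rm RB}\oplus\ncshaw(X)$, which transports via $\theta$ to $\bfk\ldf(\tx)=\ker\varphi\oplus\ncshal(X)$; the identity $\varphi i=\id$ is then immediate, since $\varphi$ is the projection onto the complement $\ncshal(X)$ along $\ker\varphi$. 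You instead re-derive the fixed-point property $\varphi(F)=F$ for each RBF $F$ by a double induction on $\dep(F)$ and $\bre(F)$, using the alternating decomposition, the operated-algebra morphism properties of $\varphi$ ($\varphi B^+=B^+\varphi$, $\varphi(FG)=\varphi(F)\dl\varphi(G)$, $\varphi(\bullet_x)=\bullet_x$), and Lemma~\mref{lem:wprod}/Eq.~(\mref{eq:lcompostion}) to convert $\dl$ back into concatenation. Your induction is sound: every factor and tail of the alternating decomposition is again an RBF of no larger depth and strictly smaller breadth (or is handled by the breadth-one case), so the hypotheses apply where you invoke them; the only cosmetic slip is that your depth-zero base case should allow factors $\bullet_\sigmaup$ as well as $\bullet_x$ with $x\in X$ (the paper's convention puts $\bullet_\sigmaup$ in depth zero), but you have already checked $\varphi(\bullet_\sigmaup)=\bullet_\sigmaup$, so the same collapsing argument covers it. What each approach buys: the paper's argument is shorter and conceptual but leans on the imported basis theorem (Proposition~\mref{pp:RBAbr}); yours is more elementary and self-contained at the combinatorial level of this paper, and it runs in exact parallel with the inductive technique used in the proofs of Theorem~\mref{thm:propm}(\mref{it:fubialg}) and Proposition~\mref{pp:comor}, at the cost of length. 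Your deduction of idempotence, $(i\varphi)(i\varphi)=i(\varphi i)\varphi=i\varphi$, coincides with the paper's.
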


\begin{proof}
Since the set of Rota-Baxter words $\frakX_\infty$, as a subset of the set of bracketed words $\frakM(X)$, gives a basis of the free Rota-Baxter algebra as the quotient of $\bfk\frakM(X)$ modulo the Rota-Baxter relation (see Eq.~(\mref{eq:rboiso})), we obtain
$$ \bfk\frakM(X)= I_{\rm RB} \oplus \ncshaw(X),$$
since $\ncshaw(X)=\bfk\frakX_\infty$.
Through the isomorphism $\theta: \bfk\frakM(X)\cong \bfk \ldf(\tx)$ and Eq.~(\mref{eq:alghom}), we obtain
$$ \bfk \ldf(\tx) = \ker \varphi \oplus \ncshal(X),$$
since $\ncshal(X)=\bfk \frakL_\infty$.
This implies $\varphi i = \id_{\ncshal(X)}$.
\end{proof}

\subsection{Cocycle bialgebra structure on free Rota-Baxter algebras}
\mlabel{ss:rbbial}

In this subsection, we apply the universal property of the cocycle Hopf algebra in Theorem~\mref{thm:propm} to obtain a cocycle bialgebra structure on $(\ncshal(X), \dl , \pl)$, as shown in Theorem~\mref{thm:rbbialgt}. We will achieve this by showing that the morphism $\varphi$ of operated algebras preserves the coproducts and the cocycles. See~\mcite{GGZ} for a direct construction of a bialgebra structure on free Rota-Baxter algebras in terms of bracketed words.

Let $\uul: {\bfk}\rightarrow \ncshal(X)$ be the linear map given by $1_{\bfk}\mapsto 1$. By Proposition~\mref{prop:alge}, $(\ncshal(X),\dl,\uul)$ is an algebra.
We now define a linear map $\del:\ncshal(X)\rightarrow\ncshal(X)\otimes\ncshal(X)$ by setting
\begin{equation}
\del(F):=(\varphi\ot\varphi)\col i (F) \quad \text{for all } ~F\in\ncshal(X),
\mlabel{eq:lcoproduct}
\end{equation} where $i:\ncshal(X)\to \bfk\ldf(\tx)$ is the natural inclusion.
In other words, $\del$ is defined so that the diagram
$$
\xymatrix{
\ncshal(X) \ar^(.4){\del}[rr] \ar_{i}[d] && \ncshal(X)\ot \ncshal(X) \\
\bfk \ldf(\tx) \ar^(.4){\col}[rr] && \bfk \ldf(\tx) \ot \bfk \ldf(\tx)\ar_{\varphi\ot \varphi}[u]
}
$$
commutes.\footnote{It would be interesting to find a combinatorial description of $\del$ in the spirit of the description of $\col$ by subforests.}
Further define $\epsl  :\ncshal(X) \rightarrow {\bfk}$ to be the linear map given by
\begin{equation}
\epsl(F) =
\left \{\begin{array}{ll}
0, &\text{ if } F\neq 1, \\
1, &\text{ if } F =1.
\end{array} \right.
\mlabel{eq:ep1}
\end{equation}

\begin{theorem}
The quintuple $(\ncshal(X),\dl,\uul,\del,\epsl)$ is a bialgebra.
With the operator $B^+$, it is a cocycle bialgebra. When $\lambda=0$, it is a cocycle Hopf algebra.
\mlabel{thm:rbbialgt}
\end{theorem}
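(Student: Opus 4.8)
The plan is to prove that the surjective operated-algebra morphism $\varphi:(\bfk\ldf(\tx),\,\cdot\,,\pl)\to(\ncshal(X),\dl,\pl)$ of Eq.~\eqref{eq:alghom} also intertwines coproducts and counits, so that the connected graded bialgebra structure of $\bfk\ldf(\tx)$ descends through $\varphi$ to $(\ncshal(X),\dl,\uul,\del,\epsl)$. Since $\del=(\varphi\ot\varphi)\col\, i$ by Eq.~\eqref{eq:lcoproduct} and $\varphi i=\id$ by Lemma~\ref{lem:varid}, the identity I must establish is $\del\varphi=(\varphi\ot\varphi)\col$, which is equivalent to $(\varphi\ot\varphi)\col\,(\id-i\varphi)=0$. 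Now $\id-i\varphi$ is the projection onto $\ker\varphi$ coming from the splitting $\bfk\ldf(\tx)=\ker\varphi\oplus\ncshal(X)$ of Lemma~\ref{lem:varid}, and that same splitting gives $\ker(\varphi\ot\varphi)=\ker\varphi\ot\bfk\ldf(\tx)+\bfk\ldf(\tx)\ot\ker\varphi$. Hence the whole statement reduces to showing that $\ker\varphi$ is a coideal, namely
\[
\col(\ker\varphi)\subseteq \ker\varphi\ot\bfk\ldf(\tx)+\bfk\ldf(\tx)\ot\ker\varphi,\qquad \epl(\ker\varphi)=0.
\]

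By the isomorphism \eqref{eq:rboiso} transported through $\theta$, the submodule $\ker\varphi$ is exactly the two-sided ideal of $(\bfk\ldf(\tx),\,\cdot\,)$ generated by the Rota-Baxter elements of Eq.~\eqref{eq:rbo},
\[
r(u,v):=\pl(u)\pl(v)-\pl(u\,\pl(v))-\pl(\pl(u)\,v)-\lambda\,\pl(uv),\qquad u,v\in\ldf(\tx),
\]
where juxtaposition denotes the concatenation product $\mul$. The heart of the proof, and the step I expect to be the main obstacle, is the direct computation of $\col$ on these generators. Writing $\col(u)=\sum_{(u)}u_{(1)}\ot u_{(2)}$ and $\col(v)=\sum_{(v)}v_{(1)}\ot v_{(2)}$ and expanding $\col(r(u,v))$ by the multiplicativity of $\col$ for $\mul$ (Eq.~\eqref{eq:eqlforest}) together with the cocycle identity \eqref{eq:eqiterated} for $\pl$, the mixed terms cancel in pairs and one is left with the clean formula
\[
\col(r(u,v))=r(u,v)\ot 1+\sum_{(u),(v)}u_{(1)}v_{(1)}\ot r(u_{(2)},v_{(2)}),
\]
whose first summand lies in $\ker\varphi\ot\bfk\ldf(\tx)$ and whose second lies in $\bfk\ldf(\tx)\ot\ker\varphi$. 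Together with $\epl(r(u,v))=0$ (every term of $r(u,v)$ is a nonempty forest, so $\ker\varphi\subseteq\ker\epl$), this verifies the coideal condition on generators; propagating along the ideal, using that $\col$ is an algebra morphism for $\mul$ and that $\ker\varphi$ is a two-sided ideal, yields the coideal property on all of $\ker\varphi$.

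Granting $\del\varphi=(\varphi\ot\varphi)\col$, the remaining bialgebra axioms for $(\ncshal(X),\dl,\uul,\del,\epsl)$ follow by transport of structure from $\bfk\ldf(\tx)$ using the surjectivity of $\varphi$: coassociativity of $\del$ from that of $\col$, multiplicativity of $\del$ from the fact that both $\col$ and $\varphi$ are algebra morphisms, and the counit axioms from $\epsl\varphi=\epl$. For the cocycle condition \eqref{eq:cocycle}, I would write an arbitrary $F\in\ncshal(X)$ as $F=\varphi(a)$ and compute, using $\varphi\pl=\pl\varphi$ (as $\varphi$ is an operated morphism), the relation $\del\varphi=(\varphi\ot\varphi)\col$, and the cocycle identity \eqref{eq:eqiterated} for $\col$, to obtain
\[
\del(\pl F)=\pl F\ot 1+(\id\ot\pl)\del(F),
\]
so that $(\ncshal(X),\dl,\uul,\del,\epsl,\pl)$ is a cocycle bialgebra.

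Finally, for $\lambda=0$ I would grade $\bfk\ldf(\tx)$ by the number of vertices, under which it is a connected graded bialgebra and $\pl$ raises the degree by one. When $\lambda=0$, each generator $r(u,v)$ (for $u,v$ forests) is homogeneous, its three surviving terms all having two more vertices than the total number of vertices of $u$ and $v$; hence $\ker\varphi$ is a homogeneous ideal and the quotient $\ncshal(X)$ inherits the grading. Its degree-zero component is $\bfk 1$, so $\ncshal(X)$ is connected graded and therefore a Hopf algebra. I expect the only delicate point to be the bookkeeping in the generator computation of the second paragraph; everything else is formal transport of structure through the surjection $\varphi$.
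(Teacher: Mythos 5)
Your reduction of the theorem to the coideal property of $\ker\varphi$ is sound, and your key computation is correct: the mixed terms in $\col(r(u,v))$ do cancel, leaving $\col(r(u,v))=r(u,v)\ot 1+\sum_{(u),(v)}u_{(1)}v_{(1)}\ot r(u_{(2)},v_{(2)})$. The genuine gap is the sentence identifying $\ker\varphi$ with the two-sided ideal of $(\bfk\ldf(\tx),\,\cdot\,)$ generated by the elements $r(u,v)$. The isomorphism in Eq.~(\mref{eq:rboiso}) is an isomorphism of \emph{operated} algebras, so $\ker\varphi$ is the \emph{operated} ideal generated by the relators, i.e.\ the smallest ideal containing them that is also closed under $\pl$, and this is strictly larger than the plain multiplicative ideal $I$. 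Concretely, take $\lambda=0$ (a case the theorem covers) and write $\bullet$ for $\bullet_\sigmaup=\pl(1)$; then $r(1,1)=\bullet\bullet-2\lc\bullet\rc$, so $\pl(r(1,1))=\lc\bullet\bullet\rc-2\lc\lc\bullet\rc\rc$ lies in $\ker\varphi$ (since $\varphi\pl=\pl\varphi$), but not in $I$: grading by vertex number, the degree-three component of $I$ is spanned by $r(\bullet_y,1)$, $r(1,\bullet_y)$, $\bullet_y\,r(1,1)$, $r(1,1)\,\bullet_y$ with $y\in\tx$, and in every one of these the basis forests $\lc\bullet\bullet\rc$ and $\lc\lc\bullet\rc\rc$ occur with \emph{equal} coefficients, whereas in $\pl(r(1,1))$ they occur with coefficients $1$ and $-2$ (so the claim fails over $\bfk=\QQ$, say). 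Consequently your propagation step, which uses only multiplicativity of $\col$ and two-sidedness of $\ker\varphi$, proves the coideal property for $I\subsetneq\ker\varphi$ only, and the needed identity $(\varphi\ot\varphi)\col(\ker\varphi)=0$ does not follow as written.

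The fix is short, and it shows where the cocycle identity~(\mref{eq:eqiterated}) must enter the propagation and not just the generator computation. Let $K=\ker\varphi$, $H=\bfk\ldf(\tx)$, and let $J$ be the set of $a\in K$ with $\col(a)\in K\ot H+H\ot K$. Your computation puts all relators in $J$; multiplicativity of $\col$ and two-sidedness of $K$ show $J$ is an ideal; and if $a\in J$ then $\col(\pl(a))=\pl(a)\ot 1+(\id\ot\pl)\col(a)\in K\ot H+H\ot K$, because $K$ is closed under $\pl$. Hence $J$ is an operated ideal containing the relators, so $J=K$, which is exactly what you need. With this inserted, your proof is complete and is genuinely different from the paper's: the paper never analyzes $\ker\varphi$, but instead proves $\del\varphi=(\varphi\ot\varphi)\col$ (Proposition~\mref{pp:comor}) by a double induction on depth and breadth, after first establishing directly that $\del$ satisfies the cocycle condition (Lemma~\mref{lem:3.8}) and is multiplicative for $\dl$. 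Your route replaces those inductions by one closed-form computation on generators, at the price of handling the operated-ideal structure carefully; the remaining steps of your proposal (transport of the bialgebra axioms, the cocycle condition on the quotient, and the $\lambda=0$ grading argument) agree in substance with the paper's proof.
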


The key step in the forthcoming proof of the theorem is Proposition~\mref{pp:comor}, that the morphism $\varphi$ of operated algebras in Eq.~(\mref{eq:alghom}) is also compatible with the coproducts. Before proving the proposition, we first prove some lemmas.

Like the coproduct $\col$ on $\hl$, the coproduct $\del$ on $\ncshal(X)$ satisfies the cocycle condition.
\begin{lemma}
Let $F=\pl(\lbar{F})$ be in $\ncshal(X)$. Then
\begin{equation}
\del(\pl(\lbar{F}))=\pl(\lbar{F})\ot1+(\id\ot\pl)\del(\lbar{F}).
\mlabel{eq:lTree}
\end{equation}
\mlabel{lem:3.8}
\end{lemma}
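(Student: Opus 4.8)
The plan is to prove Equation~(\mref{eq:lTree}) by a direct diagram chase, pushing the cocycle identity for $\col$ through the two maps $i$ and $\varphi$ out of which $\del$ is built. The three facts I would invoke are: the defining formula $\del=(\varphi\ot\varphi)\col\,i$ from Equation~(\mref{eq:lcoproduct}); the cocycle condition $\col\,\pl=\pl\ot1+(\id\ot\pl)\col$ for $\col$ on $\bfk\ldf(\tx)$ from Equation~(\mref{eq:eqiterated}); and the retraction identity $\varphi i=\id_{\ncshal(X)}$ from Lemma~\mref{lem:varid}. Beyond these I need the two commutation relations $i\,\pl=\pl\,i$ and $\varphi\,\pl=\pl\,\varphi$, asserting that both $i$ and $\varphi$ intertwine the operator $\pl=B^+$ on the two spaces, together with $\varphi(1)=1$.

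First I would record the intertwining relations. The operator $\pl$ on $\ncshal(X)$ is by construction the restriction of $B^+$ on $\bfk\ldf(\tx)$, so the inclusion $i$ satisfies $i\,\pl=\pl\,i$; and the map $\varphi$ of Equation~(\mref{eq:alghom}) is a morphism of operated algebras, so $\varphi\,\pl=\pl\,\varphi$, which on tensor factors gives $(\varphi\ot\varphi)(\id\ot\pl)=(\id\ot\pl)(\varphi\ot\varphi)$. With these in hand, and writing $F=\pl(\lbar{F})$ with $\lbar{F}\in\ncshal(X)$, the computation reads
\begin{align*}
\del(\pl(\lbar{F}))
&=(\varphi\ot\varphi)\col\,i(\pl(\lbar{F}))
=(\varphi\ot\varphi)\col\,\pl(i(\lbar{F}))\\
&=(\varphi\ot\varphi)\big(\pl(i(\lbar{F}))\ot1+(\id\ot\pl)\col(i(\lbar{F}))\big)\\
&=\varphi(\pl(i(\lbar{F})))\ot\varphi(1)+(\id\ot\pl)(\varphi\ot\varphi)\col(i(\lbar{F}))\\
&=\pl(\varphi i(\lbar{F}))\ot1+(\id\ot\pl)\del(\lbar{F})
=\pl(\lbar{F})\ot1+(\id\ot\pl)\del(\lbar{F}),
\end{align*}
where the second equality uses $i\,\pl=\pl\,i$, the third the cocycle condition for $\col$, the fourth $\varphi\,\pl=\pl\,\varphi$ on both factors with $\varphi(1)=1$, and the last $\varphi i=\id$ together with the definition of $\del$. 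This is exactly Equation~(\mref{eq:lTree}).

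The computation is formal once the two intertwining relations hold, so the only point worth checking with care is that $i\,\pl=\pl\,i$ is even well defined, i.e. that $B^+$ preserves the set $\frakL_\infty$ of Rota-Baxter forests, equivalently that $\pw$ maps $\frakX_\infty$ into itself. This is immediate from the recursion $\frakX_n=\Lambda(S(X),\frakX_{n-1})$, since $\lc\frakX_{n-1}\rc\subseteq\Lambda(S(X),\frakX_{n-1})$ (the $r=0$ term of the union defining $\Lambda$), whence $\pw(\frakX_\infty)\subseteq\frakX_\infty$ and $B^+$ restricts to $\frakL_\infty$ through $\theta$. I therefore expect no genuine obstacle: the lemma is a formal consequence of the defining property of $\del$, the cocycle condition for $\col$, and the compatibility of both $i$ and $\varphi$ with $B^+$.
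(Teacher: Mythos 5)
Your proposal is correct and follows essentially the same route as the paper's proof: unwind the definition $\del=(\varphi\ot\varphi)\col\,i$, apply the cocycle condition~(\mref{eq:eqiterated}) for $\col$, and then use that $\varphi$ intertwines $\pl$ (Eq.~(\mref{eq:alghom})) together with $\varphi i=\id$ (Lemma~\mref{lem:varid}) to land back on $\del$. The only cosmetic difference is that you phrase the inclusion step as the intertwining relation $i\,\pl=\pl\,i$ and explicitly verify that $\pw$ preserves $\frakX_\infty$, a point the paper disposes of earlier when noting that $\pw$ restricts to an operator on $\ncshaw(X)$.
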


\begin{proof}
By the linearity, we just need to verify Eq.~(\mref{eq:lTree}) for $F\in\frak L_{\infty}$. Then
\begin{align*}
\del(\pl(\lbar{F}))
&=(\varphi\ot\varphi)\col i(\pl(\lbar{F}))\quad(\text{By Eq.~(\mref{eq:lcoproduct})})\\
&=(\varphi\ot\varphi)\col(\pl(\lbar{F}))\quad(\text{by $i$ being an inclusion map})\\
&=(\varphi\ot\varphi)(F\ot1+(\id\ot\pl)\col(\lbar{F}))\quad(\text{By Eq.~(\mref{eq:eqiterated})})\\
&=\varphi(F)\ot\varphi(1)+(\varphi\ot\varphi\pl)\col(\lbar{F})\\
&=\varphi i(F)\ot\varphi(1)+(\varphi\ot\varphi\pl)\col i(\lbar{F})\quad(\text{by $i$ being an inclusion map})\\
&=F\ot1+(\varphi\ot\varphi\pl)\col i(\lbar{F})\quad(\text{by Lemma~\mref{lem:varid}})\\
&=F\ot1+(\varphi\ot\pl\varphi)\col i(\lbar{F})\quad(\text{by Eq.~(\mref{eq:alghom})})\\
&=F\ot1+(\id\ot\pl)(\varphi\ot\varphi)\col i(\lbar{F})\\
&=F\ot1+(\id\ot\pl)\del(\lbar{F})\quad(\text{by Eq.~(\mref{eq:lcoproduct})}).
\end{align*}
This is what we need.
\end{proof}

\begin{lemma}Let $F, F'\in\frakL_{\infty}$ with $F\dl F'=FF'$. Then
\begin{equation*}
\del(F\dl F')=\del(F)\dl\del(F').
\end{equation*}
\mlabel{lem:Morphism0}
\end{lemma}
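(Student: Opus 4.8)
The plan is to exploit the hypothesis $F\dl F'=FF'$ in order to replace the intricate Rota-Baxter product $\dl$ by plain forest concatenation, on which the coproduct $\col$ is already multiplicative by its very construction in Eq.~(\mref{eq:eqlforest}). Combined with the fact that $\varphi$ is an algebra homomorphism (Eq.~(\mref{eq:alghom})), the identity then follows from a short Sweedler-style computation.

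First I would unwind the inclusion. Since $F\dl F'=FF'$ is itself an RBF and $i$ is the natural inclusion of $\bfk\frakL_{\infty}$ into $\bfk\ldf(\tx)$, the element $i(F\dl F')=i(FF')$ is the forest obtained by concatenating $i(F)$ and $i(F')$; that is, $i(F\dl F')=i(F)\cdot i(F')$ in $\bfk\ldf(\tx)$. Applying the definition of $\del$ in Eq.~(\mref{eq:lcoproduct}) then gives
\[
\del(F\dl F')=(\varphi\ot\varphi)\col\big(i(F)\cdot i(F')\big).
\]

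Next I would invoke the multiplicativity of $\col$. Writing $\col i(F)=\sum_{(F)}F_{(1)}\ot F_{(2)}$ and $\col i(F')=\sum_{(F')}F'_{(1)}\ot F'_{(2)}$ in $\bfk\ldf(\tx)\ot\bfk\ldf(\tx)$, Eq.~(\mref{eq:eqlforest}) yields
\[
\col\big(i(F)\cdot i(F')\big)=\col i(F)\,\col i(F')=\sum_{(F),(F')}(F_{(1)}\cdot F'_{(1)})\ot(F_{(2)}\cdot F'_{(2)}).
\]
Applying $\varphi\ot\varphi$ and using that $\varphi$ is an algebra homomorphism from $(\bfk\ldf(\tx),\cdot)$ to $(\ncshal(X),\dl)$, each tensor factor becomes $\varphi(F_{(i)}\cdot F'_{(i)})=\varphi(F_{(i)})\dl\varphi(F'_{(i)})$, so that
\[
\del(F\dl F')=\sum_{(F),(F')}\big(\varphi(F_{(1)})\dl\varphi(F'_{(1)})\big)\ot\big(\varphi(F_{(2)})\dl\varphi(F'_{(2)})\big).
\]
Since the product $\dl$ on $\ncshal(X)\ot\ncshal(X)$ is the componentwise one, the right-hand side $\del(F)\dl\del(F')=\big((\varphi\ot\varphi)\col i(F)\big)\dl\big((\varphi\ot\varphi)\col i(F')\big)$ expands to exactly this sum, which closes the argument.

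I expect the only delicate point to be the first step, namely that the inclusion $i$ intertwines RBF concatenation with forest concatenation, $i(FF')=i(F)\cdot i(F')$. This is precisely where the hypothesis $F\dl F'=FF'$ (equivalently, by Eq.~(\mref{eq:lcompostion}), the requirement that the concatenation $FF'$ again be an RBF) is essential: without it the left-hand side would involve the grafting-and-weight terms of Eq.~(\mref{eq:Bdia}) and $\col$ would no longer be simply multiplicative. Everything else is the formal compatibility of $\varphi\ot\varphi$ with the two tensor-product algebra structures, which is immediate from $\varphi$ being an algebra map.
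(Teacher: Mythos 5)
Your proof is correct and follows essentially the same route as the paper's: both use the hypothesis $F\dl F'=FF'$ to reduce to concatenation, then combine the multiplicativity of $\col$ from Eq.~(\mref{eq:eqlforest}) with $\varphi$ being an algebra homomorphism, and close with the definition of $\del$. The only difference is presentational—you write out the Sweedler sums explicitly where the paper computes compactly—and your remark on the role of the inclusion $i$ is exactly what the paper records as ``by $i$ being an inclusion map.''
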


\begin{proof}
It follows from
\begin{align*}
\del ( F\,\dl \, F')
&=(\varphi\ot\varphi)\col i(FF')\\
&=(\varphi\ot\varphi)\col (FF')\quad(\text{by $i$ being an inclusion map})\\
&=(\varphi\ot\varphi)\bigg(\col(F)\col(F')\bigg)
\quad(\text{by $\col$ being an algebra homomorphism})\\
&=\bigg((\varphi\ot\varphi)\col(F)\bigg)\dl\bigg( (\varphi\ot\varphi)\col(F')\bigg)
\quad(\text{by $\varphi$ being an algebra homomorphism})\\
&=\bigg((\varphi\ot\varphi)\col i (F)\bigg)\dl\bigg( (\varphi\ot\varphi)\col i (F')\bigg)\quad(\text{by $i$ being an inclusion map})\\
&=\del (F)\,\dl \,\del ( F'), \quad(\text{by Eq.~(\mref{eq:eqlforest})})
\end{align*}
as needed.
\end{proof}

\begin{lemma}Let $F, F'\in\ncshal(X)$. Then
\begin{equation}
\del(F\dl F')=\del(F)\dl\del(F').
\mlabel{eq:Morphism}
\end{equation}\end{lemma}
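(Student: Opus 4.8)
The plan is to reduce Eq.~(\mref{eq:Morphism}) to basis elements, induct on the total depth, and let the weight-$\lambda$ Rota-Baxter relation supply the decisive cancellation. Since both sides are bilinear in $(F,F')$, it suffices to treat basis elements $F,F'\in\frakL_\infty$, and I would argue by induction on $n:=\dep(F)+\dep(F')$. When $F=1$ or $F'=1$ the identity is immediate, as $\del(1)=1\ot1$ is the unit of $\ncshal(X)\ot\ncshal(X)$. Otherwise write the alternating decompositions $F=F_1\cdots F_m$ and $F'=F'_1\cdots F'_{m'}$ with $m,m'\geq1$. If $F_m$ and $F'_1$ are not both graftings, then $FF'$ is again an RBF, so $F\dl F'=FF'$ by Eq.~(\mref{eq:lcompostion}) and the identity is precisely Lemma~\mref{lem:Morphism0}.

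The remaining case is $F_m=\pl(\lbar{F_m})$ and $F'_1=\pl(\lbar{F'_1})$ both graftings, so that Eq.~(\mref{eq:cdiam}) gives $F\dl F'=\hat F\,(F_m\dl F'_1)\,\hat{F'}$ with $\hat F=F_1\cdots F_{m-1}$ and $\hat{F'}=F'_2\cdots F'_{m'}$. Since $F$ and $F'$ are alternating, the neighbour of any grafting is a leaf vertex, so each of the concatenations $\hat F\cdot F_m$, $F'_1\cdot\hat{F'}$ and $\hat F\cdot(F_m\dl F'_1)\cdot\hat{F'}$ is clean, i.e.\ produces an RBF. Applying Lemma~\mref{lem:Morphism0} (extended by linearity to the sum $F_m\dl F'_1$) together with the associativity of $\dl$ on $\ncshal(X)\ot\ncshal(X)$, I would rewrite
\begin{align*}
\del(F\dl F')&=\del(\hat F)\dl\del(F_m\dl F'_1)\dl\del(\hat{F'}),\\
\del(F)\dl\del(F')&=\del(\hat F)\dl\del(F_m)\dl\del(F'_1)\dl\del(\hat{F'}).
\end{align*}
Thus the entire breadth collapses to the single breadth-one identity $\del(F_m\dl F'_1)=\del(F_m)\dl\del(F'_1)$, which I establish next using only the induction hypothesis on depth, so that no circularity arises.

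This breadth-one grafting case is the heart of the argument and its main obstacle. Using Eq.~(\mref{eq:Bdia}) I would expand
\[
F_m\dl F'_1=\pl(F_m\dl\lbar{F'_1})+\pl(\lbar{F_m}\dl F'_1)+\lambda\,\pl(\lbar{F_m}\dl\lbar{F'_1}),
\]
apply $\del$, and rewrite each grafted summand by the cocycle condition of Lemma~\mref{lem:3.8}. The three interior products $F_m\dl\lbar{F'_1}$, $\lbar{F_m}\dl F'_1$ and $\lbar{F_m}\dl\lbar{F'_1}$ pair basis RBFs of strictly smaller depth sum (note $\lbar{F_m},\lbar{F'_1}\in\frakL_\infty$), so the induction hypothesis turns each $\del$ of a product into a $\dl$-product of $\del$'s. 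On the other side I would expand $\del(F_m)=F_m\ot1+(\id\ot\pl)\del(\lbar{F_m})$ and $\del(F'_1)=F'_1\ot1+(\id\ot\pl)\del(\lbar{F'_1})$ and multiply out $\del(F_m)\dl\del(F'_1)$, the product on $\ncshal(X)\ot\ncshal(X)$ being $\dl$ in each slot. Comparing the two expansions, the summands ending in $\ot1$ collect to $(F_m\dl F'_1)\ot1$, and every remaining family matches directly except one: the family whose second tensor factor carries both brackets. There the cocycle expansion produces $\pl(\pl(a)\dl b)+\pl(a\dl\pl(b))+\lambda\,\pl(a\dl b)$ in the relevant cofactors $a,b$, which by the very Rota-Baxter relation of Eq.~(\mref{eq:Bdia}) recombines into $\pl(a)\dl\pl(b)$, exactly the term appearing in $\del(F_m)\dl\del(F'_1)$. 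This single use of the weight-$\lambda$ identity is what forces the two sides to agree; everything else is bookkeeping with Lemma~\mref{lem:Morphism0} and the cocycle condition.
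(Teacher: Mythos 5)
Your proof is correct and takes essentially the same route as the paper's: reduction to basis elements by bilinearity, induction on $\dep(F)+\dep(F')$, Lemma~\mref{lem:Morphism0} absorbing every concatenation case, and in the grafting--grafting case the cocycle identity of Lemma~\mref{lem:3.8} plus the induction hypothesis, with the weight-$\lambda$ recombination $\pl(\pl(a)\dl b)+\pl(a\dl \pl(b))+\lambda \pl(a\dl b)=\pl(a)\dl \pl(b)$ making the two expansions agree. The only difference is organizational: the paper splits $F=F_{0}T_{1}$, $F'=T'_{0}F'_{1}$ and invokes its breadth-one case for the middle pair, while you phrase the same reduction through the alternating decomposition; the mathematics is identical.
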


\begin{proof}
By the linearity, we just need to consider the case when $F,F'\in \frak L_{\infty}$.
We use induction on the sum of depths $s:=\dep(F)+\dep(F')\geq 0$. For the initial step of $s=0$, we have $\dep(F) =\dep(F') = 0$ and so $F\dl F' =FF'$. Then Eq.~(\mref{eq:Morphism}) follows from Lemma~\mref{lem:Morphism0}.

Assume that Eq.~(\mref{eq:Morphism}) holds for $s=t\geq 0$ and consider the case of $s=t+1$. In this case, we first consider the case when $\bre(F)=\bre(F')=1$.
If $F\dl F' = FF'$, then Eq.~(\mref{eq:Morphism}) follows from Lemma \mref{lem:Morphism0}. If $F\dl F' \neq FF'$, then we have
$F=\pl(\lbar{F})\,\text{ and }\, F'=\pl(\lbar{F}')$ for some $\lbar{F},\lbar{F}'\in\frak L_{\infty}.$
Write \begin{equation}
\del(\lbar{F}):=\sum_{(\lbar{F})}\lbar{F}_{(1)}\ot\lbar{F}_{(2)}\,\text{ and }\,\del(\lbar{F}'):=\sum_{(\lbar{F}')}\lbar{F}'_{(1)}\ot\lbar{F}'_{(2)}.
\mlabel{eq:coff}
\end{equation}
Then
\allowdisplaybreaks{
\begin{align*}
&\del(F\dl F')\\
&= \del(\pl(\lbar{F})\dl \pl(\lbar{F}')) \\
&=\del\pl\Big(\lbar{F}\dl \pl(\lbar{F}')+\pl(\lbar{F})\dl \lbar{F}'+\lambda (\lbar{F}\dl \lbar{F}') \Big)\quad(\text{by Proposition \mref{prop:alge}}) \\
&=(F\dl F')\ot 1+(\id\ot\pl)\del\bigg(\lbar{F}\dl \pl(\lbar{F}')+\pl(\lbar{F})\dl \lbar{F}'+\lambda (\lbar{F}\dl \lbar{F}') \bigg)\quad(\text{by Eq.~(\mref{eq:lTree})}) \\
&=(F\dl F')\ot 1+(\id\ot\pl)\bigg( \del(\lbar{F})\dl \del(\pl(\lbar{F}')) + \del(\pl(\lbar{F}))\dl \del(\lbar{F}') +\lambda \del(\lbar{F})\dl \del(\lbar{F}') \bigg)\\
&\hspace{2cm}(\text{by the induction hypothesis on}~s)\\
&= (F\dl F')\ot 1+\sum_{(\lbar{F})}(\lbar{F}_{(1)}\dl F')\ot\pl(\lbar{F}_{(2)})+\sum_{(\lbar{F}')}(F\dl\lbar{F}'_{(1)})\ot\pl(\lbar{F}'_{(2)})\\
&+\sum_{(\lbar{F})}\sum_{(\lbar{F}')}(\lbar{F}_{(1)}\dl\lbar{F}'_{(1)})
\ot(\pl(\lbar{F}_{(2)})\dl\pl(\lbar{F}'_{(2)}))\quad(\text{by Eqs.~(\mref{eq:lTree}) and ~(\mref{eq:coff})})\\
&= \bigg(F\ot 1+\sum_{(\lbar{F})}\lbar{F}_{(1)}\ot\pl(\lbar{F}_{(2)})\bigg)\dl\bigg(F'\ot 1+\sum_{(\lbar{F}')}\lbar{F}'_{(1)}\ot\pl(\lbar{F}'_{(2)})\bigg)\\
&= \bigg(F\ot 1+(\id\ot\pl)\del(\lbar{F})\bigg)\dl\bigg(F'\ot 1+(\id\ot\pl)\del(\lbar{F}')\bigg)\\
&= \del(F)\dl \del(F')
\end{align*}
}
This finishes the proof when $\bre(F)=\bre(F')=1$.

We next consider general $F$ and $F'$. Write
$F=F_{0}T_{1}\,\text{ and }\, F'=T'_{0}F'_{1}$,
where $F_{0}, F'_{1}$ are Rota-Baxter forests and $T_{1},T'_{0}$ are Rota-Baxter trees.
Since
$$F_{0} (T_{1}\dl T'_{0}) F'_{1} =F_{0}\dl (T_{1}\dl T'_{0})\dl F'_{1},$$ we have
\allowdisplaybreaks{
\begin{align*}
\del(F\dl F')
&=\del( F_{0} (T_{1}\dl T'_{0}) F'_{1})\\
&=\del(F_{0})\dl \del(T_1\dl T'_0) \dl \del(F'_{1}) \quad (\text{by Lemma~\mref{lem:Morphism0}})\\
&=\del(F_{0})\dl \del(T_{1})\dl \del(T'_{0})\dl\del( F'_{1})\quad(\text{by the case when }\bre(F)=\bre(F')=1)\\
&=\big(\del(F_{0})\dl \del(T_{1})\big)\dl \big(\del(T'_{0})\dl\del( F'_{1})\big)\\
&=\del(F_{0}\dl T_{1})\dl\del(T'_{0}\dl F'_{1})\quad(\text{by Lemma~\mref{lem:Morphism0}})\\
&=\del(F_{0} T_{1})\dl\del(T'_{0}F'_{1})\\
&=\del(F)\dl\del(F').
\end{align*}
}
This completes the proof.
\end{proof}

The following result shows that the algebra homomorphism $\varphi$ in Eq.~(\mref{eq:alghom}) is compatible with the coproducts.

\begin{prop}
Let $F\in \bfk\ldf(\tx) $. Then
\begin{equation}
\del \varphi(F) =(\varphi\ot\varphi)\col (F).
\mlabel{eq:compatible}
\end{equation}
\mlabel{pp:comor}
\end{prop}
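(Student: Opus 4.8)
The plan is to establish Eq.~(\mref{eq:compatible}) by the same double induction—first on the depth $\dep(F)$ and then on the breadth $\bre(F)$—that drove the proof of Theorem~\mref{thm:propm}.(\mref{it:fubialg}). By linearity it suffices to treat a basis forest $F\in\ldf(\tx)$. The four facts I would feed into the induction are all already available: the cocycle identity for $\del$ (Lemma~\mref{lem:3.8}), the multiplicativity $\del(F\dl F')=\del(F)\dl\del(F')$ (Eq.~(\mref{eq:Morphism})), the fact that $\varphi$ is a morphism of operated algebras (so $\varphi\pl=\pl\varphi$ and $\varphi$ is multiplicative), and the defining properties of $\col$, namely the cocycle condition Eq.~(\mref{eq:eqiterated}) and its multiplicativity Eq.~(\mref{eq:eqlforest}).

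For the base of the induction, the cases $F=1$ and $F=\bullet_x$ ($x\in\tx$) follow by unwinding the definition $\del=(\varphi\ot\varphi)\col i$ in Eq.~(\mref{eq:lcoproduct}) together with $\col(\bullet_x)=\bullet_x\ot1+1\ot\bullet_x$ and $\varphi i=\id$ (Lemma~\mref{lem:varid}). The heart of the breadth-one step is a single tree $T$ of depth $\geq1$, which we write as $T=\pl(\lbar F)$ with $\dep(\lbar F)<\dep(T)$. Here I would compute
\begin{align*}
\del\varphi(T)&=\del(\pl(\varphi(\lbar F)))=\pl(\varphi(\lbar F))\ot 1+(\id\ot\pl)\del(\varphi(\lbar F))\\
&=(\varphi\ot\varphi)\big(\pl(\lbar F)\ot1\big)+(\id\ot\pl)(\varphi\ot\varphi)\col(\lbar F)=(\varphi\ot\varphi)\col(T),
\end{align*}
using $\varphi\pl=\pl\varphi$ in the first equality, Lemma~\mref{lem:3.8} in the second, the depth-induction hypothesis $\del\varphi(\lbar F)=(\varphi\ot\varphi)\col(\lbar F)$ in the third, and finally $(\id\ot\pl)(\varphi\ot\varphi)=(\varphi\ot\varphi)(\id\ot\pl)$ together with the cocycle condition Eq.~(\mref{eq:eqiterated}) for $\col$ to reassemble $\col(\pl(\lbar F))$.

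For the breadth-induction step I would write a forest $F$ of breadth $\geq2$ as a concatenation $F=F_1F_2$ with $\bre(F_i)<\bre(F)$ and $\dep(F_i)\leq\dep(F)$, so that the induction hypothesis (on depth if $\dep(F_i)<\dep(F)$, otherwise on breadth) gives $\del\varphi(F_i)=(\varphi\ot\varphi)\col(F_i)$ for $i=1,2$. Then
$$\del\varphi(F)=\del\big(\varphi(F_1)\dl\varphi(F_2)\big)=\del\varphi(F_1)\dl\del\varphi(F_2)=(\varphi\ot\varphi)\col(F_1)\dl(\varphi\ot\varphi)\col(F_2),$$
using that $\varphi$ is multiplicative and then Eq.~(\mref{eq:Morphism}). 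Since $\varphi$ is an algebra homomorphism, $\varphi\ot\varphi$ intertwines the concatenation product on $\bfk\ldf(\tx)\ot\bfk\ldf(\tx)$ with the $\dl$-product on $\ncshal(X)\ot\ncshal(X)$, so the last expression equals $(\varphi\ot\varphi)\big(\col(F_1)\col(F_2)\big)$; and $\col(F_1)\col(F_2)=\col(F_1F_2)=\col(F)$ by the multiplicativity of $\col$ (Eq.~(\mref{eq:eqlforest})). This closes both inductions.

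I expect the genuine mathematical content to sit in the two lemmas already proved—especially the multiplicativity of $\del$ in Eq.~(\mref{eq:Morphism})—so that the only real work here is bookkeeping: nesting the depth and breadth inductions so that in the concatenation step each factor is covered by one of the two hypotheses, and substituting $\varphi(\lbar F)$ (rather than $\lbar F$) into Lemma~\mref{lem:3.8} in the breadth-one step. A conceptually cleaner alternative, which I would keep in reserve, is to note that by Eq.~(\mref{eq:lcoproduct}) the claim is equivalent to $(\varphi\ot\varphi)\col(i\varphi-\id)=0$; since $\im(i\varphi-\id)\subseteq\ker\varphi$ by Lemma~\mref{lem:varid}, this reduces to showing that $(\varphi\ot\varphi)\col$ annihilates $\ker\varphi$, i.e.\ the transported Rota-Baxter ideal. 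The inductive route seems preferable, however, since it avoids analyzing how $\col$ interacts with the ideal generators.
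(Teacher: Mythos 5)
Your proof is correct and takes essentially the same route as the paper's: the identical double induction on depth and then breadth, with the breadth-one step combining Lemma~\mref{lem:3.8} applied to $\varphi(\lbar{F})$, the intertwining $\varphi\pl=\pl\varphi$, and the cocycle condition Eq.~(\mref{eq:eqiterated}), and the concatenation step combining the multiplicativity of $\varphi$, of $\del$ (Eq.~(\mref{eq:Morphism})) and of $\col$ (Eq.~(\mref{eq:eqlforest})). The only differences are organizational: you suppress the Sweedler-notation bookkeeping, and you take as base case only $1$ and $\bullet_x$, letting the breadth step absorb longer depth-zero forests---which, incidentally, handles forests such as $\bullet_{\sigmaup}\bullet_{\sigmaup}$ (not in $\frakL_\infty$) more cleanly than the paper's base case does.
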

\begin{proof}
By the linearity, it suffices to consider the case when $F\in \ldf(\tx) $ for which we apply the induction on $\dep(F)$.
For the initial step of $\dep(F)=0$, we have  $F=\bullet_{x_1}\cdots\bullet_{x_{m}}$ with $m\geq 0$.
Then $F$ is in $\frakL_\infty$ and hence $F=i(F)$.  Then Eq.~(\mref{eq:compatible}) follows from Lemma~\mref{lem:varid} and the definition of $\del$.
Assume that the conclusion holds for $F$ with $\dep(F)=k$. For the induction step of $\dep(F)=k+1\geq 1$, we apply the induction on the breadth $\bre(F)$.
If $m=1$, since $k+1\geq1$, we can write $F =\pl(\lbar{F})$ for some $\lbar{F}\in \ldf(\tx)$. Then
\begin{align*}
\del \varphi(F) &=\del \big(\varphi( \pl(\lbar{F})) \big) = \del \big(\pl(\varphi(\lbar{F}))\big) \quad(\text{by Eq.~(\mref{eq:alghom})})\\
&= \pl(\varphi(\lbar{F})) \ot 1 + (\id\ot \pl)\del(\varphi(\lbar{F})) \quad(\text{by Eq.~(\mref{eq:lTree}}))\\
&= \pl(\varphi(\lbar{F})) \ot 1 + (\id\ot \pl) (\varphi\ot \varphi)\col(\varphi(\lbar{F})) \quad(\text{by the induction hypothesis})\\
&= \varphi(\pl(\lbar{F})) \ot \varphi(1) +  (\varphi\ot \varphi)  (\id\ot \pl) \col(\varphi(\lbar{F})) \quad(\text{by Eq.~(\mref{eq:alghom})})\\
&= (\varphi\ot \varphi) \Big(\pl(\lbar{F}) \ot 1 + (\id\ot \pl) \col(\lbar{F}) \Big)\\
&= (\varphi\ot \varphi) \big( \col(\pl(\lbar{F}))\big) \quad ( \text{by Eq.~(\mref{eq:eqiterated})}) \\
&= (\varphi\ot \varphi) \big( \col(F)\big).
\end{align*}
Assume that the result holds for $\dep(F)= k+1$ and $\bre(F) \leq\ell$, in addition to $\dep(F)\leq k$ by the induction hypothesis on $\dep(F)$,
and consider the case when $\dep(F) = k+1$ and $\bre(F)=\ell+1\geq2$.
Then we can write $F=F_{1}F_{2}$ with $F_1, F_2\in
\ldf(\tx)$ and $\bre(F_{1})+\bre(F_{2})=\ell+1$. Write
$$\col(F_{1})=\sum_{(F_{1})}F_{1(1)}\ot F_{1(2)}\text{ and }\col(F_{2})=\sum_{(F_{2})}F_{2(1)}\ot F_{2(2)}.$$
By the induction hypothesis on $\bre(F)$,
\begin{equation}
\begin{aligned}
\del(\varphi(F_{1}))=(\varphi\ot\varphi)\col(F_{1})=\sum_{(F_{1})}\varphi(F_{1(1)})\ot\varphi(F_{1(2)}),\\
\del(\varphi(F_{2}))=(\varphi\ot\varphi)\col(F_{2})=\sum_{(F_{2})}\varphi(F_{2(1)})\ot\varphi(F_{2(2)}).
\end{aligned}
\mlabel{eq:phiphi}
\end{equation}
Thus we have
\begin{align*}
(\varphi\ot\varphi)\col(F)
&=(\varphi\ot\varphi)\col(F_{1}F_{2})\\
&=(\varphi\ot\varphi)\col(F_{1})\col(F_{2})\quad(\text{by Eq.~(\mref{eq:eqlforest})})\\
&=\sum_{(F_{1})}\sum_{(F_{2})}\varphi(F_{1(1)}F_{2(1)})\ot\varphi(F_{1(2)}F_{2(2)})\\
&=\sum_{(F_{1})}\sum_{(F_{2})}(\varphi(F_{1(1)})\,\dl \,\varphi(F_{2(1)}))
\ot(\varphi(F_{1(2)})\,\dl \,\varphi(F_{2(2)}))\quad (\text{by Eq.~(\mref{eq:alghom})})\\
&=\left(\sum_{(F_{1})}\varphi(F_{1(1)})\ot\varphi(F_{1(2)})\right)\,\dl \,
\left(\sum_{(F_{2})}\varphi(F_{2(1)})\ot\varphi(F_{2(2)})\right)\\
&=\Big( \del(\varphi(F_{1})) \Big)\,\dl \,\Big(\del(\varphi(F_{2})) \Big)
\quad (\text{by Eq.~(\mref{eq:phiphi})})\\
&=  \del\varphi(F_1 F_2) \quad (\text{by Eqs.~(\mref{eq:alghom}) and~(\mref{eq:Morphism})})\\
&= \del \varphi(F),
\end{align*}
as required.
\end{proof}

Now we are ready to give the proof of Theorem~\mref{thm:rbbialgt}.
\begin{proof} (of Theorem~\mref{thm:rbbialgt}).
By Eq.~(\mref{eq:alghom}) and Lemma~\mref{lem:varid},
\begin{equation*}
\varphi:(\bfk \ldf(\tx) ,\,\cdot,\, \pl)\rightarrow (\ncshal(X),\, \dl ,\, \pl)
\mlabel{eq:alghom*}
\end{equation*}
is a surjective operated algebra homomorphism. Furthermore from Proposition~\mref{pp:comor},
$\varphi$ is a coalgebra homomorphism. Hence  $(\ncshal(X),\dl,\uul,\del,\epsl)$ is a bialgebra.
The second statement follows from Eq.~(\mref{eq:lTree}).
Recall that $\hck(X)$ is a connected bialgebra and hence a Hopf algebra, by the grading given by the number of vertices of rooted forests.
If $\lambda=0$, the Rota-Baxter relation is homogeneous and hence generates a homogenous operated ideal. Thus the quotient $\ncshal(X)$ inherits the same grading and is connected. Thus it is a Hopf algebra. Alternatively, $\dl$ is graded and $\del$ is cograded with respect to the same grading, hence giving a connected bialgebra, and thus a Hopf algebra, on $\ncshal(X)$.
\end{proof}

Contrary to the case when $\lambda=0$, when $\lambda\neq 0$, the Rota-Baxter relation is not homogeneous with respect to the grading giving by the number of vertices. Further $\del$ does not preserve the grading.
For example for F $=\chicken$, we have
\begin{align*}
\del(F)=& (\varphi\ot\varphi)\col(\chicken)\\
=&\chicken\,\, \ot\,\, 1
+\,\,\beef\quad\ot \cutt\,\,
+\,\,2\week\,\,\ot\,\fresh\,\,+\,\,\lambda\cutt\,\,\ot\,\fresh\,\,
+\,\,\pork\,\,\ot\,\week\,\,
+\,\,\cutt\,\,\ot\,\needt\,\,\\
\smallskip
&\,+\,\,\food\quad\ot \week\,\,
+\,\,\cutt\,\,\ot\,\caser\,\,
+\,\,2\layer\,\,\ot\,\noodle\,\,+\,\,\lambda\layer\,\,\ot\,\week\,\,
+\,\,1\,\,\ot\,\chicken\,\,.
\end{align*}
Note that both the terms
$$\cutt\,\,\ot\,\fresh\,\text{ and }\, \layer\,\,\ot\,\week$$
have degree 3, smaller than that of $F$.
Thus when $\lambda\neq 0$, $\del$ is not cograded with respected to the degree given by the number of vertices. It is still possible that $\ncshal(X)$ can be shown to be a Hopf algebra by other methods.
\smallskip

\noindent {\bf Acknowledgements}: This work was supported by the National Natural Science Foundation of China (Grant No. 11371177 and 11371178). X. Gao thanks Rutgers University at Newark for its hospitality. The authors thank L. Foissy for helpful suggestions.

\end{document}